\documentclass[twocolumn,10pt]{article}
\usepackage[utf8]{inputenc}
\usepackage{amsmath,amssymb}
\usepackage{multirow}
\everymath{\mathtt{\xdef\tmp{\fam\the\fam\relax}\aftergroup\tmp}}

\usepackage[textwidth=18.35cm,textheight=24cm]{geometry} 
\usepackage{balance} 

\usepackage{amsthm}
\newcommand\Sym{\mathit{Sym}}

\newtheorem{thm}{Theorem}
\newtheorem{cor}[thm]{Corollary}
\newtheorem{lem}[thm]{Lemma}
\newtheorem{prop}[thm]{Proposition}

\newtheorem{defn}[thm]{Definition}

\newcommand{\manaut}[1]{}

\usepackage[numbers]{natbib}
\newcommand{\textcite}[1]{\citeauthor*{#1}~\cite{#1}}
\newcommand{\printbibliography}{
\bibliographystyle{plainnat} 
\bibliography{LongSteepestRefs}
}

\usepackage{authblk}

\title{Steepest ascent can be exponential in bounded treewidth problems}

\author[1]{David A. Cohen}
\author[2]{Martin C. Cooper}
\author[3,4]{Artem Kaznatcheev}
\author[5]{Mark Wallace}
\affil[1]{Department of Computer Science, Royal Holloway University of London, Egham, UK}
\affil[2]{IRIT, University of Toulouse III, Toulouse, France}
\affil[3]{Department of Computer Science, University of Oxford, Oxford, UK}
\affil[4]{Department of Translational Hematology \& Oncology Research, Cleveland Clinic, Cleveland, USA}
\affil[5]{Faculty of Information Technology, Monash University, Melbourne, Australia}

\date{}

\begin{document}
\maketitle

\begin{abstract}
We investigate the complexity of local search based on steepest ascent.
We show that even when all variables have domains of size two and the underlying constraint graph  of variable interactions has bounded treewidth (in our construction, treewidth 7), there are fitness landscapes for which an exponential number of steps may be required to reach a local optimum.
This is an improvement on prior recursive constructions of long steepest ascents, which we prove to need constraint graphs of unbounded treewidth.
\end{abstract}

\section{Introduction}

We are interested in the long-term behaviour of steepest-ascent local search in optimisation problems over finite domains.
Local search is an important tool in solving optimisation problems when exhaustive search algorithms, such as branch-and-bound, are not a practical possibility due to the size of the search space.

It is therefore important to gain understanding of the behaviour of local search and, in particular, the number of steps required to reach a local optimum. If local search can continue for a number of steps which is exponential in the number of variables, then in practice we can consider that it may never reach a stable state within a reasonable timescale.
This failure to equilibrate is especially important to understand for commonly used greedy local search heuristics like steepest ascent.

Algorithmically, steepest ascent proceeds by a sequence of steps towards a local maximum of the objective function. 
At each step, a best move is chosen to transform the current state into a better neighbouring state. 
In this paper we restrict our attention to the simplest case in which all variables are Boolean and the set of possible moves from a given state is just the flipping of one variable. 

Throughout the paper, we will refer to the objective function being maximized as the fitness function.
We choose this terminology due to the central role that local search has played in the study of biological evolution.
This has been the case ever since \manaut{Wright}\textcite{W32} introduced the idea of viewing the evolution of populations of organisms as a local search process over a space of possible genotypes with associated fitness values that became known as a ``fitness landscape."
For over 80 years since Wright's introduction of fitness landscapes, it has been assumed that if the fitness function is not time-varying (i.e. if we have a static fitness landscape) then typical populations will quickly evolve to a local fitness peak.
Recently, \manaut{Kaznatcheev}\textcite{artemGenetics} has challenged this assumption by showing biologically reasonable families of landscapes where a local peak cannot be found in time polynomial in the number of genes (variables).
On such ``hard" fitness landscapes, we can consider local peaks as practically unreachable.
This situation can be seen as open-ended evolution~\cite{OEE}.

In the case of biological evolution, the particular local search algorithm that is implemented can depend on the details of the particular population structure, genetic architecture, and mutation rate~\cite{artemGenetics}.
However, in the limit of large populations with small mutations rates, the evolutionary dynamics are often modeled as steepest ascent.
Thus, rigorously analyzing steepest ascent can help us better understand biological evolution.
Of course, biology is not the only domain where local search matters.
Similar models are used in fields like business operation \& innovation theory~\cite{orgBehavOrig,orgBehavNetwork}, physics, and economics~\cite{R09}. 

\subsection{Soft constraints and fitness functions}

A discrete optimisation problem can be formalised as a set of functions on subsets of its variables.  The value of the objective function is the sum of the values of these functions. 
In the generic framework known as the VCSP (Valued Constraint Satisfaction Problem), each function is known as a ``soft constraint" on its variables~\cite{soft}.
The ``scope” of a soft constraint is its set of variables; each combination of values assigned to these variables  is mapped to a value by the soft constraint. 
The ``arity” of a constraint is the number of variables in its scope.  
The ``degree” of a variable is the number of other variables that occur with it in some constraint scope. 
In the formulation of the VCSP considered in this paper the objective function is to be maximised. 

Biologically a combination of gene expressions underlies a phenotypic trait, and a combination of its phenotypic traits underlie the fitness.  
Put very simply, fitness is the sum of the values of its phenotypic traits. 
We can assign this same value to the combination of gene expressions which underly the phenotype traits.  
The number of genes involved in a single phenotypic trait is assumed to be reasonably small – say less than some fixed number $K+1$ (the amount of epistasis), and any one gene is involved in a reasonably small number of phenotypic traits – say some fixed number $P$ (the amount of pleiotropy).

A VCSP is a model of fitness if we associate a variable with each gene, which takes the value 1 if the gene is expressed.  
Each soft constraint over a set of variables, models the set of genes underlying a phenotypic trait, and the value of the soft constraint represents the value of the phenotypic trait.  
The bound $K+1$ on the number of genes underlying a phenotypic trait is modelled by a bound on the arity of the soft constraints.   
The bound $P$ on the number of different phenotypes involving one gene is modelled by a bound on the degree of any variable in the VCSP.  

In this paper we only consider functions expressible as a set of soft constraints with bounded arity (our examples all have arity at most eight).  
The \emph{constraint graph} is the graph whose vertices are the variables and with an edge between two variables if there is a soft constraint between them (i.e. they both belong to the scope of some soft constraint).
Given a constraint graph $G$, and a variable index $i$, we will let $d_G(i)$ be the \emph{degree of $i$ in $G$}: i.e. the total number of other variables that co-occur with $x_i$ in the scope of some constraint in the VCSP.

\subsection{Bounded treewidth and local search}

We are particularly interested in the behaviour of local search when the constraint graph has bounded treewidth. 
When treewidth is bounded, it is well known that it is possible to find a global optimum of a VCSP in polynomial time~\cite{BB}, 
but this does not inform us on the complexity of local search even for the simpler problem of reaching a local maximum. 
A major drawback of local search as a technique for solving optimisation problems is the possibility of there being an exponential number of local maxima whose value may even be exponentially smaller than the value of the global maximum. 
It is not difficult to show that this phenomenon can occur when treewidth is bounded. 
Consider, for example, the global objective function

\begin{equation}
    F(x) \ =\ \sum_{i=1}^{N/2} f(x_{2i-1},x_{2i})
\end{equation}

\noindent where 

\begin{equation}
f(a,b)   = \begin{cases}
    1 & \text{ if } \ a = b = 0 \\
    \alpha  & \text{ if } \ a = b = 1 \\
    0  & \text{ otherwise } 
\end{cases}
\end{equation}

\noindent where $x=\langle x_1,\ldots,x_N \rangle \in \{0,1\}^N$, and $N$ is even.
Each of the tuples consisting of repeated bits (i.e. $x_{2i-1}=x_{2i}$ for $i=1,\ldots,N/2$) is a local maximum. 
Furthermore, the ratio between the value of the global maximum and the value of the worst local maximum is $\alpha$ and hence is exponential if $\alpha=2^N$. 
The constraint graph is a forest (consisting of $N/2$ unconnected edges) and hence has treewidth 1. 

Thus bounded treewidth does not alleviate the phenomenon of multi-modality. 
The possibility of exponential complexity of  local search to reach a local maximum, which is the topic of this paper, is an independent phenomenon which, as we have seen, is of interest in genetics as well as computer science.

\subsection{Summary}

We show that local search following a steepest ascent can take an exponential number of steps to reach a local optimum even when (a) the variables are Boolean, (b) the problem is formalised using bounded arity soft constraints, and (c) the constraint graph has bounded treewidth.
Previously, classes of objective functions have been described for which steepest ascent may require an exponential number of variable flips to reach a local maximum~\cite{longpath,artemGenetics} but the fitness functions were defined recursively and not expressed as the sum of bounded-arity functions. 
We first show in the next section that these examples cannot be expressed by soft constraints whose constraint graph has bounded treewidth.
In related work, \manaut{Kaznatcheev, Cohen, and Jeavons}\textcite{artemCP} have studied the complementary problem of guaranteeing short paths to a local optimum for local search which performs an arbitrary improving flip at each step rather than a steepest ascent.
In the process, they have described some bounded treewidth examples where some exponentially long improving paths exist but these are not the paths followed by steepest ascent.

\section{Earlier recursive construction of hard fitness landscapes}

We are not the first to consider long steepest-ascent walks in fitness landscapes.
Two constructions from the literature~\cite{longpath, artemGenetics} of fitness landscapes on $n$ variables where steepest ascent requires of the order of $2^n$ steps are particularly relevant.
However, both these landscapes were specified recursively and not by concrete VCSP instances.
In this section, we give a simpler presentation of the recursive definition of winding landscapes (Definition~\ref{def:winding}) that generalises both the construction in \cite{longpath} and \cite{artemGenetics}.
We then prove that if these winding landscapes were implemented by a VCSP then the fitness graph would have unbounded treewidth (Corollary~\ref{cor:unboundedTreewidth}).
This will motivate consideration of a different form of fitness landscape in Section~\ref{sec:exponential}.

\subsection{Recursive construction}
\label{sec:recursiveHard}

Given a fitness landscape on $\{0,1\}^{2n}$, we refer to the hypercube $\{0,1\}^{2k}$ on the first $2k$ variables as a sub-cube.

\begin{defn}
Suppose we are given a sequence of fittest steps $s^+_k > 0$ and barrier steps $s^-_k$ for $0 \leq k \leq n$ satisfying the conditions  $s^+_k > s^-_{k + 1}$,  $s^-_k < s^+_k < s^+_{k+1}$.  
Consider the recursive construction of the functions $f^k \{0,1\}^{2k} \rightarrow \mathbb{R}$ 
with sub-cube optima $x_k^* = 0^{2(k - 1)}11$ for $0 \leq k \leq n$ (where $x_0^* = \epsilon$, the empty string):

\begin{align}
    f^0(\epsilon) & = 0 \\
    f^{k + 1}(xab) & = \begin{cases}
f^k(x) \hfill  \text{if } a = b = 0 \\
f^k(x) + s_{k + 1}^- \hfill \text{if } a \neq b \text{ \& } x \neq x_k^* \\
f^k(x_k^*) + s_{k + 1}^- \hfill  \text{if } a = 0, b = 1 \text{ \& } x = x_k^* \\
f^k(x_k^*) + s_{k + 1}^+ \hfill \text{if } a = 1, b = 0 \text{ \& } x = x_k^* \\
f^k(x\oplus x_k^*) + f^k(x_k^*) + 2s_{k + 1}^+ \hfill  \text{if } a = b = 1
\end{cases}
\end{align}

\noindent We define the ($n$th) \emph{winding landscape} as $f = f^n$.
\label{def:winding}
\end{defn}

\noindent Here $x \oplus x_k^*$ means the bit-wise XOR of $x$ and $x_k^*$ (i.e. $[x \oplus x_k^*]_i = [x]_i + [x_k^*]_i \mod 2$).

Based on this definition, if the steepest-ascent path takes $T_k$ steps to reach $x_k^*$ from $0^{2k}$ in $f^k$ then steepest-ascent will take the following path from $0^{2k}$ to $x_{k + 1}^*$ in $f^{k + 1}$ (see Appendix C.2 of \cite{artemGenetics} for the proof):

\begin{equation}
0^{2(k + 1)} \rightarrow^{T_k} x_k^*00 \rightarrow x_k^*10 \rightarrow x_k^*11 \rightarrow^{T_k} 0^{2k}11 = x_{k + 1}^*
\end{equation}

\noindent which has a length of $T_{k + 1} = 2T_k + 2$ steps.
Solving this recurrence equation with $T_0 = 0$, we get $T_n = 2^{n + 1} - 2$.

On the one hand, if $s^-_k \leq 0$ for all $0 \leq k \leq n$ then Definition~\ref{def:winding} implements \manaut{Horn, Goldberg, and Deb}\textcite{longpath}'s \emph{Root2path} construction and the long path from $0^{2n}$ is not only the steepest ascent but also the only ascent.
On the other hand, if $s^-_k > 0$ for all $0 \leq k \leq n$ then Definition~\ref{def:winding} implements \manaut{Kaznatcheev}\textcite{artemGenetics}'s winding semismooth fitness landscape that has no reciprocal sign epistasis but still maintains the long path from $0^{2n}$ as the steepest ascent.
This semismooth case is of interest because a short ascent exists from each variable assignment to the unique fitness peak but steepest ascent does not find this short ascent.

\subsection{Lower bounds on the constraint-graph}

The winding fitness landscape has the dramatic property of having drastic changes in the direction and magnitude of the gradient of the objective function between points which are very close to each other.
Consider the sub-cube spanned by the first $2(k+1)$-variables. 
The sub-cube fitness maximum is at $x^*_k= 0^{2k}11$.
If $s^-_{k + 1} > 0$ (as in~\cite{artemGenetics}) then the sub-cube fitness minimum is only Hamming-distance $2$ away at $0^{2(k + 1)}$: so the flows (i.e. differences in the objective function from the current point to its neighbouring points) change from all positive to all negative in just $2$ steps.
From this we can prove that the total scope size of any constraint
graph implementing this fitness landscape must be high.
The case in which we do not necessarily have $s^-_{k + 1} > 0$ (as in \cite{longpath}) is slightly more complicated since $0^{2(k + 1)}$ is no longer a fitness minimum and mostly has negative flows. 
However, these negative flows have small magnitude compared to the very large magnitude negative flows at $x^*_k$, so a similar argument can be used.
We formalise this argument below.

For convenience, let $||x||_0$ be the number of non-zero entries in $x$ -- also known as the Hamming weight or the zero-`norm' of the vector $x$.
And let us use $x[i \rightarrow b]$ to mean a bit-string that is the same as $x$ at every bit, except the $i$-th bit is set to $b$.
Or, in symbols: $\forall j \neq i \quad [x[i \rightarrow b]]_j = [x]_j$ and $[x[i \rightarrow b]]_i = b$.
This allows us to define the \emph{gradient} $\nabla f$ of a fitness function $f$ entry-wise as $[\nabla f(x)]_i = f(x[i \rightarrow 1]) - f(x[i \rightarrow 0])$ to state our degree lower-bound lemma:

\begin{lem}
Given a fitness function $f$ implemented by a VCSP with constraint graph $G$ and any two distinct variable assignments $x$ and $y$ that differ on a set of variables $S$, we have that the total degree $d_G(S) = \sum_{i \in S}d_G(i)$ of $S$ in $G$ is lower-bounded by the change in flow: $d_G(S) \geq ||\nabla fx - \nabla fy ||_0$.
\label{lem:lbD}
\end{lem}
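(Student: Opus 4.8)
The plan is to exploit the fact that each coordinate of the gradient is \emph{local}: the value $[\nabla f(x)]_j$ depends only on the values that $x$ assigns to the neighbours of $j$ in the constraint graph $G$. To see this, write $f = \sum_c f_c$ as a sum of its soft constraints, where $f_c$ has scope $\mathrm{scope}(c)$. Flipping the $j$-th bit leaves unchanged every term $f_c$ with $j \notin \mathrm{scope}(c)$, so those terms cancel and
\begin{equation}
[\nabla f(x)]_j = \sum_{c\,:\,j \in \mathrm{scope}(c)} \bigl( f_c(x[j \to 1]) - f_c(x[j \to 0]) \bigr).
\end{equation}
Each summand no longer depends on $x_j$ itself, only on the values $x$ assigns to $\mathrm{scope}(c) \setminus \{j\}$; and every variable in $\mathrm{scope}(c) \setminus \{j\}$ co-occurs with $j$ in constraint $c$, hence is a neighbour of $j$ in $G$. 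Thus $[\nabla f(x)]_j$ is a function of $x$ restricted to the neighbourhood $N_G(j)$ of $j$.

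Given this locality, I would finish with a short counting argument. Let $D = \{\, j : [\nabla f(x)]_j \neq [\nabla f(y)]_j \,\}$, so that $\|\nabla f(x) - \nabla f(y)\|_0 = |D|$. If $x$ and $y$ agree on all of $N_G(j)$, then by the locality observation $[\nabla f(x)]_j = [\nabla f(y)]_j$; contrapositively, every $j \in D$ has a neighbour in the disagreement set $S$, i.e. $N_G(j) \cap S \neq \emptyset$. Equivalently, $j$ lies in the neighbourhood of $S$, so that
\begin{equation}
D \subseteq N_G(S), \qquad \text{where } N_G(S) = \bigcup_{i \in S} N_G(i).
\end{equation}
A union bound then gives $|D| \leq |N_G(S)| \leq \sum_{i \in S} |N_G(i)| = \sum_{i \in S} d_G(i) = d_G(S)$, which is exactly the claimed inequality $d_G(S) \geq \|\nabla f(x) - \nabla f(y)\|_0$.

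I expect the only step with real content to be the locality claim — establishing that $[\nabla f(x)]_j$ depends solely on the neighbours of $j$. This is where the VCSP decomposition and the very definition of the constraint graph (variables are adjacent exactly when they share a constraint scope) are used; once it is in place, the inclusion $D \subseteq N_G(S)$ and the union bound are routine.
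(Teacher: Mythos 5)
Your proof is correct, and it rests on the same key insight as the paper's --- that $[\nabla f(x)]_j$ is determined by $x$ restricted to the neighbours of $j$ in $G$, because only constraints whose scope contains $j$ survive the cancellation in $f(x[j\to 1]) - f(x[j\to 0])$ --- but you aggregate differently. The paper first proves the single-flip case ($|S|=1$) and then handles general $S$ by telescoping $\nabla f(x) - \nabla f(y)$ along a shortest path from $x$ to $y$ in the hypercube and applying the triangle inequality for $\|\cdot\|_0$, bounding each summand by the degree of the flipped variable. You skip the path entirely: from locality you get the contrapositive statement that any coordinate $j$ where the gradients differ must have a neighbour in the disagreement set $S$, hence $D \subseteq N_G(S)$, and a union bound finishes. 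Your route is a little cleaner and in fact proves the marginally stronger bound $\|\nabla f(x) - \nabla f(y)\|_0 \leq |N_G(S)|$, which can be strictly smaller than $\sum_{i\in S} d_G(i)$ when the neighbourhoods of the variables in $S$ overlap; the paper's telescoping version only delivers the sum. For the application in Proposition 4 either bound suffices, so the difference is cosmetic, but your argument is the one I would keep.
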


\begin{proof}
If we look at a variable at position $i$ and compare $\nabla f(x[i \rightarrow 1])$ to $\nabla f(x[i \rightarrow 0])$ then any differences in the gradient must have been due solely to the change in variable $x_i$.
Thus, given any position $j$ such that $[\nabla f(x[i \rightarrow 1])]_j \neq [\nabla f(x[i \rightarrow 0])]_j$ there must be a constraint that has both $i$ and $j$ (and maybe others) in its scope.
Thus, by looking at the number of non-zero entries in $\nabla f(x[i \rightarrow 1]) - \nabla f(x[i \rightarrow 0])$, 
we get a lower bound on the number of other variables with which each variable $i$ co-occurs in a constraint.

This reasoning can be extended over paths between non-adjacent states.
Suppose we have two states $x_1$ and $x_t$ at Hamming distance $t$ from each other.
Let $x_1x_2...x_{t - 1}x_k$ be any shortest path between them with the bits flipped at each step given by $i_1, i_2, ... i_{t - 1}$.
Notice the following:

\begin{align}
||\nabla fx_1 - \nabla fx_k||_0 &= ||(\nabla fx_1 - \nabla fx_2) + (\nabla fx_2 - \nabla fx_3) + \nonumber \\
& \cdots + (\nabla fx_{t - 1} - \nabla fx_t)||_0 \\
&\leq ||\nabla fx_1 - \nabla fx_2||_0 + ||\nabla fx_2 - \nabla fx_3||_0 + \nonumber \\
& \cdots + ||\nabla fx_{t - 1} - \nabla fx_t)||_0 \label{eq:lowerbound}
\end{align}

\noindent In words: given two states $x_1$ and $x_k$ that differ at a set of variables $S$, the total number of variables that the variables in $S$ co-occur with is lower-bounded by $||\nabla fx_1 - \nabla fx_k||_0$.
\end{proof}

Now, we can apply this lower bound technique to the winding fitness landscape.

\begin{prop}
If the winding fitness landscape $f$ on $2n$ variables from Definition~\ref{def:winding} is implemented by a VCSP with constraint graph $G$ then  $d_G(2k + 1) + d_G(2k + 2) \geq k$ for each $0 \leq k < n$.
\label{prop:highDegree}
\end{prop}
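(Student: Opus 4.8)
The plan is to invoke Lemma~\ref{lem:lbD} with the set $S = \{2k+1,2k+2\}$ and the two assignments
\[
x = 0^{2n}, \qquad y = 0^{2k}11\,0^{2(n-k-1)},
\]
which agree everywhere except on $S$ (this is where $k<n$ is used, so that positions $2k+1,2k+2$ exist). By the lemma it then suffices to exhibit at least $k$ coordinates on which $\nabla f(x)$ and $\nabla f(y)$ disagree, since $d_G(2k+1)+d_G(2k+2)=d_G(S)\ge ||\nabla f(x)-\nabla f(y)||_0$. The choice of endpoints is guided by the discussion preceding the proposition: $x$ sits at the all-zero corner while $y$ sits at the $(k+1)$st sub-cube optimum $x^*_{k+1}=0^{2k}11$. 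Because every bit beyond position $2k+2$ is $0$ in both $x$ and $y$, repeated use of the $a=b=0$ branch of Definition~\ref{def:winding} lets me replace $f=f^n$ by $f^{k+1}$ throughout, so all reasoning happens inside the $(k+1)$st sub-cube.

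First I would compute $\nabla f(x)$ on the first $2k$ coordinates. Flipping a single bit of $0^{2n}$ in the $j$th pair (positions $2j-1,2j$) keeps the last two sub-cube coordinates at $00$, so the $a=b=0$ branch reduces to $f^j$ of a weight-one string; evaluating this through the ``$a\neq b$'' branch gives $[\nabla f(x)]_i = s^-_j$ for each $i$ in pair $j$ (the lone exception is the first bit of pair $1$, where the prefix equals $x^*_0=\epsilon$ and the value is $s^+_1$). Next I would compute $\nabla f(y)$ on the same coordinates. Here the relevant slice lies in the $a=b=1$ branch $f^{k+1}(w11)=f^k(w\oplus x^*_k)+f^k(x^*_k)+2s^+_{k+1}$, and differentiating this composite in $w$ at $w=0^{2k}$ (where $w\oplus x^*_k=x^*_k$) shows that $[\nabla f(y)]_i$ equals the \emph{flow out of the sub-cube optimum} $x^*_k$ under the flip of bit $i$, namely $f^k\big(x^*_k[i\to 1-[x^*_k]_i]\big)-f^k(x^*_k)$.

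The heart of the argument is a telescoping identity. Writing $M_j:=f^j(x^*_j)$, the $a=b=1$ branch collapses the flow out of $x^*_m$ in pair $j$ to the flow out of $x^*_j$ in its own last pair for every $j\le m$, and a direct evaluation of that base case gives $s^-_j-M_j$. Combining the two computations, the coordinate-wise difference on pair $j$ is
\[
[\nabla f(x)]_i-[\nabla f(y)]_i = s^-_j-(s^-_j-M_j)=M_j,
\]
and since $M_j=2M_{j-1}+2s^+_j$ with $M_0=0$ and every $s^+_j>0$, we get $M_j>0$ for all $j\ge 1$. The crucial point is that the (possibly large and negative) barrier term $s^-_j$ cancels, so the coordinates disagree regardless of the sign of $s^-_j$; this is exactly what makes the bound hold uniformly across the Root2path case ($s^-_j\le 0$) and the semismooth case ($s^-_j>0$).

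I expect the main obstacle to be the bookkeeping in the telescoping step: correctly propagating the XOR-shift by $x^*_k$ through the recursion and checking that the ``flow out of $x^*_m$'' reduces cleanly down to level $j$ without accumulating stray special-case terms. A few base cases in pair $1$ (where a prefix coincides with $x^*_0=\epsilon$) behave differently, but they can only destroy the disagreement at the single second bit of pair $1$, so at least $2k-1\ge k$ of the coordinates in $\{1,\dots,2k\}$ disagree for every $k\ge 1$ (and the claim is vacuous at $k=0$). Feeding $||\nabla f(x)-\nabla f(y)||_0\ge k$ into Lemma~\ref{lem:lbD} yields $d_G(2k+1)+d_G(2k+2)\ge k$, as required.
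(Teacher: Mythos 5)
Your proposal is correct and takes essentially the same route as the paper: both invoke Lemma~\ref{lem:lbD} on the pair $x=0^{2n}$, $y=0^{2k}11\,0^{2(n-k-1)}$ (which differ exactly on $S=\{2k+1,2k+2\}$) and lower-bound $d_G(S)$ by the number of coordinates on which the two gradients disagree. The only divergence is in the bookkeeping of the peak gradient: the paper's closed form gives $-s^+_j+b(s^-_j-s^+_j)$ on pair $j$ and counts the $k$ odd coordinates, whereas your telescoping yields $s^-_j-M_j$ at both coordinates of each pair $j\geq 2$ (which agrees with a direct evaluation of the recursion and still produces the non-zero difference $M_j$), so either computation delivers $||\nabla f(x)-\nabla f(y)||_0\geq k$ and hence the stated bound.
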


\begin{proof}
Let us look at the gradients at the path's starting point:

\begin{equation}
    \nabla f(0^{2n}) = [s^+_1, s^-_1, s^-_2, s^-_2, ..., s^-_i, s^-_i, ..., s^-_n, s^-_n]
    \label{eq:bot}
\end{equation}

\noindent and for each $1 \leq k \leq n$, look at the gradients at subcube peaks $\nabla f(0^{2(k - 1)}(11)0^{2(n - k)})$: they have a slightly more complicated form, so we define them point-wise for $i \in [1,n]$, $b \in \{0,1\}$, and $x = 0^{2(k - 1)}(11)0^{2(n - k)}$:

\begin{equation}
    [\nabla f(x)]_{2i - b} = \begin{cases}
    -s^+_i + b(s^-_i - s^+_i) & \text{if } i < k \\
    f^k(x^*_k) - s^-_k & \text{if } i = k \\
    s^+_{k + 1} & \text{if } i = k + 1 \text{ \& } b = 1 \\
    s^-_i & \text{if } i > k + b
    \end{cases}
    \label{eq:top}
\end{equation}

\noindent Looking at the odd entries lower than $2k$ (i.e. $i < k, b = 1$ in equation~\ref{eq:top}),
we have $$[\nabla f(0^{2(k - 1)}(11)0^{2(n - k)}) - \nabla f(0^{2n})]_{2i - 1} = -2s_i^+ \neq 0\quad .$$
Thus by equation~\ref{eq:lowerbound}, the variables at positions 
$2k + 1$ and $2k + 2$ together have degree of at least $k$.
\end{proof}

\begin{cor}
Any VCSP implementing the winding fitness landscape from Definition~\ref{def:winding} must have a constraint graph that is dense and with unbounded treewidth.
\label{cor:unboundedTreewidth}
\end{cor}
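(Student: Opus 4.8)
The plan is to leverage Proposition~\ref{prop:highDegree} by summing its degree bounds over all $k$ to obtain a quadratic lower bound on the number of edges, and then to contrast this with the well-known sparsity of bounded-treewidth graphs.

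First I would observe that as $k$ ranges over $0 \leq k < n$, the indices $2k+1$ and $2k+2$ range over every variable index in $\{1,\ldots,2n\}$ exactly once. Summing the inequality $d_G(2k+1) + d_G(2k+2) \geq k$ from Proposition~\ref{prop:highDegree} over all such $k$ therefore gives
\[
\sum_{i=1}^{2n} d_G(i) \;\geq\; \sum_{k=0}^{n-1} k \;=\; \frac{n(n-1)}{2}.
\]
Since the left-hand side equals twice the number of edges $|E(G)|$ by the handshaking lemma, we conclude $|E(G)| \geq n(n-1)/4$. As the graph has only $2n$ vertices, its average degree is at least $(n-1)/4 = \Omega(n)$, which establishes that $G$ is dense.

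To obtain unbounded treewidth, I would invoke the standard fact that a graph of treewidth $w$ is $w$-degenerate and hence has at most $w\,|V|$ edges (every subgraph has a vertex of degree at most $w$, so deleting vertices one at a time removes at most $w$ edges each time). Applying this with $|V| = 2n$ gives $|E(G)| \leq 2wn$. Combining with the lower bound $|E(G)| \geq n(n-1)/4$ yields $w \geq (n-1)/8 = \Omega(n)$, so the treewidth of any constraint graph implementing the winding landscape grows at least linearly in $n$ and is therefore unbounded.

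The conceptual point worth flagging --- and the step one must not skip --- is that the high individual degrees furnished by Proposition~\ref{prop:highDegree} do not by themselves force large treewidth: a star, for instance, has a single vertex of arbitrarily high degree yet treewidth $1$. The argument goes through only because the per-pair degree bounds \emph{aggregate} into a genuinely quadratic edge count, and it is the total number of edges, not the presence of high-degree vertices, that is incompatible with bounded treewidth via the degeneracy bound. The remaining verification --- that the pairs $(2k+1,2k+2)$ partition the variable set and that the edge-versus-treewidth inequality is applied with the correct constants --- is routine.
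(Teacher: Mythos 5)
Your proposal is correct and follows essentially the same route as the paper: sum the per-pair degree bounds from Proposition~\ref{prop:highDegree} to get a total degree of order $n^2$, then contrast this with the linear edge count forced by bounded treewidth. You merely make explicit (via degeneracy and the handshaking lemma) the sparsity fact that the paper's proof states without elaboration.
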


\begin{proof}
Summing up over all $1 \leq k \leq n$, we get that any VCSP instance that implements $f$ must have total degree of at least $(n - 1)n/2$ (i.e. quadratic in the number $2n$ of variables).

In particular, this means that a constraint graph of bounded treewidth (which would have total degree linear in $2n$) cannot implement the winding fitness landscape $f$.
\end{proof}

In the following section we avoid the issue raised by Corollary~\ref{cor:unboundedTreewidth} by using an explicit construction based on a low-treewidth constraint graph.

\section{Exponential local search with bounded treewidth}
\label{sec:exponential}

We seek a simple model whose $n$ variables are all Boolean (with domain $\{0,1\}$), for which steepest ascent local search, by flipping the value of a variable at each step, has an exponentially long ascending sequence.  
In the model we construct, the soft constraints will have scopes of arity eight, namely the variables with index $4i,\dots,4i + 7$.  
The constructed model has a constraint graph of treewidth seven.  

In order to better explain the judgments made in constructing our novel fitness landscape, we build the model in a series of stages.  
At each stage we modify or extend the set of domains, variables and soft constraints from the previous stage so as to reach the final model which satisfies the above conditions.

\subsection{Stage 1: Counting}

Our initial model will have $N$ variables denoted $X_N,\dots,X_1$, which will all have the same finite domain including the values 0 and 1.

We will denote a state, i.e. an assignment of the $N$ variables, as an array of their values with the value of $X_1$ on the right.  
Thus the assignment $X_1= 1, X_2 = 0, X_3 = 0$ will be denoted by the array $\langle 0,0,1 \rangle$.
A local search path can thus be represented as a sequence of arrays.
The final exponential-length ascending path will include as a sub-sequence an encoding of the standard Boolean \textit{counting} sequence $0, 1, 10, 11, 100, 101,\dots $ or, more precisely:

\begin{equation}
0^N, 0^{N-1}1, 0^{N-2}10, 0^{N-2}11, 0^{N-3}100, 0^{N-3}101,\dots  
\end{equation}

\noindent which is highlighted in all of the explicit examples.

Since the final model will have $O(N)$ Boolean variables this will be a sub-sequence of exponential length.

\subsection{Stage 2: Two new domain values}

The values $0$ and $1$ alone do not enable us to write down a sequence of arrays of increasing value reached by local search.  
We only allow a local search step to flip a single variable value.
Since many of the steps in the binary counting sequence above alter the value of many more than one variable, they cannot be steps in our final ascent.

To fix this we introduce a \textit{carry} symbol, $C$.  
The idea is that carry represents the fact that we have arrived at $S01^i$, where $S$ is some arbitrary prefix of length $N-i-1$. 
We first flip the rightmost $1$ to be the new carry symbol.  We then propagate this carry symbol to the left so long as it is preceded by a  $1$. 
Each propagation first yields two adjacent carry symbols, and then we replace the right hand carry with a $0$.  Eventually we have only one carry symbol and it has a $0$ to its left.  
Now we flip this $0$, replacing $0C$ with a $1C$, and we arrive at $S1C0^{i-1}$.  
We can now flip the final $C$ into a $0$ and we have successfully performed a carry.
For example, to count from $[15]_2 = 1111$ to $[16]_2 = 10000$:

\begin{tabbing}
\hspace{2cm} \= \textbf{1:}\hspace{5mm} \= 0 \ \= 0 \ \= 0 \ \= 1 \ \= 1 \ \= 1 \ \= 1 \\
\>2:\ \> 0 \> 0 \> 0 \> 1 \> 1 \> 1 \> C \\
\>3:\ \> 0 \> 0 \> 0 \> 1 \> 1 \> C \> C \\
\>4:\ \> 0 \> 0 \> 0 \> 1 \> 1 \> C \> 0 \\
\>5:\ \> 0 \> 0 \> 0 \> 1 \> C \> C \> 0 \\
\>6:\ \> 0 \> 0 \> 0 \> 1 \> C \> 0 \> 0 \\
\>7:\ \> 0 \> 0 \> 0 \> C \> C \> 0 \> 0 \\
\>8:\ \> 0 \> 0 \> 0 \> C \> 0 \> 0 \> 0 \\
\>9:\ \> 0 \> 0 \> 1 \> C \> 0 \> 0 \> 0 \\
\>\textbf{10:}\ \> 0 \> 0 \> 1 \> 0 \> 0 \> 0 \> 0
\end{tabbing}

\noindent In this sequence only one variable is flipped at each step.
But note that the pair $1C$ changes in different ways.  
At lines 2, 4 and 6 $1C$ flips to $CC$.  However at line 9 $1C$ flips to $10$.
This is because the carry symbol appears in two contexts.  
We introduce $C$ to indicate a carry, but when its job is done and the $0$ on its left has become a $1$, the carry symbol needs to be removed. 

So far, the construction is equivalent to \manaut{Kaznatcheev, Cohen and Jeavons}\textcite{artemCP}'s Example 2 and with the right choice of soft-constraints is sufficient to show that some fitness increasing path has exponential length.
But this needs further elaboration to make the \emph{steepest} ascent exponentially long and reduce domain size to $2$.

To eliminate the dual role of carry as add and remove, we introduce a fourth symbol $X$ and put extra steps to enable the transition from state 8 to state 9.
When the $0C$ should become $1C$, we first move to $XC$, then remove the carry symbol, and then flip the $X$ to a $1$. 

\begin{tabbing}
\hspace{2cm} \= 8:\hspace{5mm}\= 0 \ \= 0 \ \= 0 \ \= C \= 0 \= 0 \= 0 \\
\> 9a: \> 0 \> 0 \> X \> C \> 0 \> 0 \> 0 \\
\> 9b:\> 0 \> 0 \> X \> 0 \> 0 \> 0 \> 0 \\
\> \textbf{10:}\> 0 \> 0 \> 1 \> 0 \> 0 \> 0 \> 0
\end{tabbing}

The admissible configurations of these symbols can be summarised in as the six cases in equation \ref{eq:01-X}:

\begin{equation}
\begin{aligned}
\{01\}+ & &
\{01\}^*1C\{01\}^*0 & &
\{01\}^*0C\{01\}^*0\\
\{01\}^*CC\{01\}^*0 & &
0^*XC\{01\}^*0 & &
0^*X\{01\}^*0
\end{aligned}
\label{eq:01-X}
\end{equation}

\noindent A VCSP will be designed so that only admissible configurations of these symbols will be reached by steepest ascent from $0^N$.

\subsection{Stage 3: New intermediate symbols}

Even with our new symbols $C$ and $X$, the sequence we proposed in the previous section cannot 
correspond to a steepest ascent.
Simply flipping the most significant variable from 0 to 1 (so $0^N$ becomes $10^{N-1}$) jumps directly from the start the the end of the sequence.

It is necessary to ensure that a state $s$ cannot be a neighbour of a previous state $s'$ (apart from its immediate predecessor in the path) otherwise the steepest-ascent would take the short-cut from $s'$ to $s$.
In order to constrain and prioritise moves we introduce an intermediate symbol between each pair of symbols.
It thus requires two steps to change any of the main symbols, $0,1,C$ and $X$ to another main symbol, via an intermediate symbol.

To encode all (4 main and 6 intermediate) symbols we will use four boolean variables.  The encoding, and the definition of the soft constraints, will ensure that a flip of a boolean variable which increases the objective 
(a) can only yield boolean sequences that encode (main or intermediate) symbols;
(b) cannot change a main to a main symbol;
(c) cannot change an intermediate to an intermediate symbol;
and, (d) can only change a main to an intermediate symbol if there currently are no intermediate symbols (i.e. all flips from intermediate symbols to main symbols are better than any move from a main symbol to an intermediate symbol)


Each intermediate symbol can be chosen to be one flip away from exactly two main symbols: 
it falls therefore ``between" the two main symbols.  
Consequently, when an intermediate symbol is flipped during a sequence of local moves, in order to improve again, it must move forward to the other main symbol.

We build the set of (six) intermediate symbols using a function symbol $i$ where $i_{ab} = i_{ba}$ is the intermediate symbol between main symbols $a$ and $b$.
We denote the set of symbols: 

\begin{equation}
\Sym = \{0,1,C,X,i_{01}, i_{C,0}, i_{0,X}, i_{1,C}, i_{X,1}, i_{C,X}\}
\end{equation}

\paragraph{Boolean encoding of main and intermediate symbols.}
The explicit encoding using the four Boolean variables $x_{1,i}, x_{2,i}, x_{3,i}, x_{4,i}$ is given in the following table.  

\[
\begin{array}{r| c c c c}
X_i & x_{1,i} & x_{2,i} & x_{3,i} & x_{4,i}\\
\hline
 0 & 1 & 0 & 0 & 0\\
 1 & 0 & 1 & 0 & 0 \\
 C & 0 & 0 & 1 & 0 \\
 X & 0 & 0 & 0 & 1 \\
i_{01}  & 1 & 1 & 0 & 0 \\         
i_{C0}  & 1 & 0 & 1 & 0 \\        
i_{0X}  & 1 & 0 & 0 & 1 \\        
i_{1C}  & 0 & 1 & 1 & 0 \\       
i_{X1}  & 0 & 1 & 0 & 1 \\       
i_{CX} & 0 & 0 & 1 & 1         
 \end{array}
 \]
 
Each intermediate symbol has two Boolean variables set to $1$, and each original symbol corresponds to a pattern with just one Boolean variable set to $1$.
The row $\langle 0,0,0,0 \rangle$ and any row with more than two $1$'s represents a non-symbol.

The intermediate symbols mean that the example of a steepest-ascent path will now take the following form (where each symbol would now be represented by four Boolean variables):

\begin{tabbing}
\hspace{2cm} \= \textbf{1:}\hspace{5mm} \= 0 \ \ \= 0 \ \ \= 0 \ \ \= 1 \ \ \= 1 \ \ \= 1 \ \ \= 1 \\
\> \> 0 \> 0 \> 0 \> 1 \> 1 \> 1 \> $i_{1C}$ \\ 
\> 2: \> 0 \> 0 \> 0 \> 1 \> 1 \> 1 \> C \\
\> \> 0 \> 0 \> 0 \> 1 \> 1 \> $i_{1C}$ \> C \\
\> 3: \> 0 \> 0 \> 0 \> 1 \> 1 \> C \> C \\
\> \> 0 \> 0 \> 0 \> 1 \> 1 \> C \> $i_{C0}$ \\
\> 4: \> 0 \> 0 \> 0 \> 1 \> 1 \> C \> 0 \\
\> \> 0 \> 0 \> 0 \> 1 \> $i_{1C}$ \> C \> 0 \\
\> 5: \> 0 \> 0 \> 0 \> 1 \> C \> C \> 0 \\
\> \> 0 \> 0 \> 0 \> 1 \> C \> $i_{C0}$ \> 0 \\
\> 6: \> 0 \> 0 \> 0 \> 1 \> C \> 0 \> 0 \\
\> \> 0 \> 0 \> 0 \> $i_{1C}$ \> C \> 0 \> 0 \\
\> 7: \> 0 \> 0 \> 0 \> C \> C \> 0 \> 0 \\
\> \> 0 \> 0 \> 0 \> C \> $i_{C0}$ \> 0 \> 0 \\
\> 8: \> 0 \> 0 \> 0 \> C \> 0 \> 0 \> 0 \\
\> \> 0 \> 0 \> $i_{0X}$ \> C \> 0 \> 0 \> 0 \\
\> 9a: \> 0 \> 0 \> X \> C \> 0 \> 0 \> 0 \\
\> \> 0 \> 0 \> X \> $i_{C0}$ \> 0 \> 0 \> 0 \\
\> 9b: \> 0 \> 0 \> X \> 0 \> 0 \> 0 \> 0 \\
\> \> 0 \> 0 \> $i_{X1}$ \> 0 \> 0 \> 0 \> 0 \\
\> \textbf{10:}\> 0 \> 0 \> 1 \> 0 \> 0 \> 0 \> 0
\end{tabbing}

There are now more possible configurations than those six that appear in equation~\ref{eq:01-X}.  
The additional admissible configurations that include intermediate symbols are $\{01\}^+i_{01}$, $\{01\}^+i_{1C}$, $\{01\}^+i_{1C}C\{01\}^*$, $0^*i_{0X}C\{01\}^*$, $\{01\}^+Ci_{C0}\{01\}^*$, $0^*Xi_{C0}\{01\}^*$.
The VCSP soft constraints will be designed so that only these ``intermediate" configurations will occur on the steepest ascent from $0^N$.

\subsection{Encoding using Soft Constraints}
\paragraph{Ensuring that intermediate symbols only occur in admissible configurations}
The VCSP which assigns a value to each state comprises just two constraints, a unary constraint $h$ which applies only to the last symbol, and a binary constraint $f$.  
The constraint $f$ applies to each pair of neighbouring symbols. 
By assigning a value of $0$ to each inadmissible pair of symbols it is simple to prevent an improving flip generating an intermediate symbol in an inadmissible configuration.

Indeed, for each intermediate symbol, we can list the non-zero rows of $f$ that include that symbol:
\begin{tabular}{llll}
     $i_{01}$: None & $\;\;$ & $i_{C0}$: $f(C,i_{C0}), f(X,i_{C0})$ &  $i_{0X}$: $f(i_{0X},C)$\\
     $i_{1C}$: $f(i_{1C},C)$ & $\;\;$ & $i_{X1}$: $f(i_{X1},0)$ &  $i_{CX}$:  None
\end{tabular}

\noindent Let us call the sequence (from $0^N$ to $01^{N-1}$) via steepest ascent, a counting path.
Each of these non-zero-valued rows lies between a predecessor and successor on a counting path:

\begin{tabbing}
    $\quad$ CC $\Rightarrow$ C$i_{C0}$ $\Rightarrow$ C0 $\quad$ XC $\Rightarrow$ X$i_{C0}$ $\Rightarrow$ X0\\
    $\quad$ 0C $\Rightarrow$ $i_{0X}$C $\Rightarrow$ XC $\quad$
    1C $\Rightarrow$ $i_{1C}$C $\Rightarrow$ CC\\
    $\quad$ X0 $\Rightarrow$ $i_{X1}0$ $\Rightarrow$ 10
\end{tabbing}

\noindent The values of $f$ for each row will be set so that both forward flips yield an increase in the valuation of the state.
Consequently a backward flip, from the resulting configurations \verb9C0,X0,XC,CC,109, will always result in a lower valuation, and therefore never occur.

Note, in particular that the pair of main symbols 
\verb9CC, XC, 0C, 1C, X09
which yield these configurations by a flip can only occur once in an admissible configuration.
Thus through these flips, two intermediate symbols will never appear in a configuration on a counting path.

To enable the symbols $i_{01}$ and $i_{1C}$ to occur as the last symbol, the soft constraint $h$, gives a non-zero value to $h(i_{01})$ and to $h(i_{1C})$.  
The values of $h$ are small enough that on a counting path $i_{01}$ and $i_{1C}$ will never occur in a configuration that has any other intermediate symbols.
The forward flips introduced above are always more valuable than $h$.

\paragraph{Ensuring the counting path only yields admissible configurations}
We show in this section that it is possible to make this counting path a steepest ascent by a judicious choice of soft constraints. 
We first identify a set of prioritised transition rules which if followed from the
initial tuple $0^N$ produce exactly the counting path. 
We then construct soft constraints which define a fitness landscape on which the transition rules correspond to an ascending path. 
The list of transition rules is given below.
\begin{enumerate}
\item Change the last symbol from 0 to 1 (incrementing the counting number)
\begin{enumerate}
    \item If $X_2 \in \{0,1\}$ and $X_1 = 0$  then set $X_1 = i_{01}$.\label{incr-rule}
        \item If $X_2 \in \{0,1\}$ and $X_1 = i_{01}$ , then set $X_1 = 1$.\label{incr-rule2}
\end{enumerate}  
\item Change the last symbol from 1 to C
\begin{enumerate}
    \item If $X_2 \in \{0,1\}$ and $X_1 = 1$  then set $X_1 = i_{1C}$.\label{firstCarry-rule}
    \item If $X_2 \in \{0,1\}$ and $X_1 = i_{1C}$ , then set $X_1 = C$.\label{firstCarry-rule2}
\end{enumerate}
\item Carry to 1
\begin{enumerate}
    \item If $X_{i+1} = 1$ and $X_i = C$ then set $X_{i+1} = i_{1C}$.\label{duplicateCarry-rule}
    \item If $X_{i+1} = i_{1C}$ and $X_i = C$ then set $X_{i+1} = C$.\label{duplicateCarry-rule2}
\end{enumerate}
\item Carry to 0 
\begin{enumerate}
    \item If $X_{i+1} = 0$ and $X_i = C$ then set $X_{i+1} = i_{0X}$.\label{temporise-rule}
    \item If $X_{i+1} = i_{0X}$ and $X_i = C$ then set $X_{i+1} = X$.\label{temporise-rule2}
\end{enumerate}
\item Drop the carry after C
\begin{enumerate}
    \item If $X_{i+1} = C$ and $X_i = C$ then set $X_i = i_{C0}$.\label{secondCarry-rule}
    \item If $X_{i+1} = C$ and $X_i = i_{C0}$ then set $X_i = 0$.\label{secondCarry-rule2}
\end{enumerate}
\item Drop the carry after X
\begin{enumerate}
    \item If $X_{i+1} = X$ and $X_i = C$ then  set $X_i = i_{C0}$.\label{dropCarry-rule}
    \item If $X_{i+1} = X$ and $X_i = i_{C0}$ then  set $X_i = 0$.\label{dropCarry-rule2}
\end{enumerate}
\item Use the final carry X
\begin{enumerate}
    \item If $X_{i+1} = X$ and $X_i = 0$ then  set $X_{i+1} = i_{X1}$.\label{actualCarry-rule}
    \item If $X_{i+1} = i_{X1}$ and $X_i = 0$ then  set $X_{i+1} = 1$.\label{actualCarry-rule2}
    \end{enumerate}
\end{enumerate}
\label{rules}

These rules are applicable in the following main symbol configurations:

\begin{align*}
&\{01\}^*0& Rule \ref{incr-rule}\\
&\{01\}^*1& Rule \ref{firstCarry-rule}\\
&\{01\}^*1C\{01\}^*0& Rule \ref{incr-rule}, Rule \ref{duplicateCarry-rule}\\
&\{01\}^*0C\{01\}^*0& Rule \ref{incr-rule}, Rule \ref{temporise-rule}\\ 
&\{01\}^*CC\{01\}^*0& Rule \ref{incr-rule}, Rule \ref{duplicateCarry-rule}, Rule \ref{temporise-rule}, Rule \ref{secondCarry-rule}\\
&0^*XC\{01\}^*0& Rule \ref{incr-rule}, Rule \ref{dropCarry-rule}\\
&0^*X\{01\}^*0& Rule \ref{incr-rule}, Rule \ref{actualCarry-rule}
\end{align*}

\noindent To maintain the counting property, Rule \ref{incr-rule} must only fire in the first configuration.
To keep to admissible configurations, Rule \ref{secondCarry-rule} must fire in the fifth configuration.

Thus the priorities we impose between these rules are as follows:
    (1) We prefer Rule~\ref{secondCarry-rule}, 
             over both Rules~\ref{duplicateCarry-rule}, \ref{temporise-rule}; and
    (2) We prefer any of Rules~\ref{duplicateCarry-rule}, \ref{temporise-rule}, 
             \ref{secondCarry-rule},
             \ref{dropCarry-rule},
             \ref{actualCarry-rule}, 
              over Rule~\ref{incr-rule}.
The VCSP will ensure that other flips do not improve on the current valuation.

The rules applicable in the states which have an intermediate value are:

\begin{align*}
&\{01\}^+i_{01}& Rule \ref{incr-rule2}\\
&\{01\}^+i_{1C}& Rule \ref{firstCarry-rule2}\\
&\{01\}^+i_{1C}C\{01\}^*& Rule \ref{duplicateCarry-rule2}\\
&0^*i_{0X}C& Rule \ref{temporise-rule2}\\
&0^*i_{0X}C\{01\}^*0& Rule  \ref{incr-rule}, Rule \ref{temporise-rule2}\\
&\{01\}^+Ci_{C0}& Rule \ref{duplicateCarry-rule}, Rule \ref{temporise-rule}, Rule \ref{secondCarry-rule2}\\
&\{01\}^+Ci_{C0}\{01\}^*0& Rule \ref{incr-rule}, Rule \ref{duplicateCarry-rule}, Rule \ref{temporise-rule}, Rule \ref{secondCarry-rule2}\\
&0^*Xi_{C0}\{01\}^*& Rule \ref{dropCarry-rule2}
\end{align*}

\noindent To ensure only admissible configurations Rule \ref{temporise-rule2} must fire in the fifth configuration, and Rule \ref{secondCarry-rule2} must fire in the sixth and seventh configurations.
By applying the above rules according to these priorities we ensure that only admissible states appear on the steepest ascent from $0^N$.  
We term this the Counting Path Property (CPP).
%
%
%
%
%
It is straightforward to verify the following CPP lemma:

\begin{lem}  
\label{lem:CPP}
All flips when applied to an admissible state yield an admissible state by applying the above prioritised transition rules.
\end{lem}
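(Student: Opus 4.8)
The plan is to prove Lemma~\ref{lem:CPP} as a finite closure check: the admissible states form a finite list of configuration patterns (the main-symbol shapes of equation~\ref{eq:01-X} together with the intermediate-symbol shapes listed above), and the claim amounts to showing that the prioritised transition rules map this list into itself. The structural observation that makes this tractable is that every transition rule is a \emph{local} rewrite: it inspects at most the two adjacent symbols $X_{i+1},X_i$ and flips a single Boolean variable, turning a main symbol into one of its two neighbouring intermediate symbols or an intermediate symbol into one of its two neighbouring main symbols. Consequently a rule alters an admissible configuration only inside a bounded window, and closure reduces to checking, window by window, that each rewrite sends an admissible pattern to an admissible pattern.

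First I would organise the verification around the two applicability tables already assembled: the seven main-symbol configurations (the shapes of equation~\ref{eq:01-X}, with $\{01\}^+$ split into $\{01\}^*0$ and $\{01\}^*1$) and the eight intermediate-symbol configurations, for each of which the applicable rules have already been enumerated. Because the soft constraints are designed so that the only improving flips from an admissible state are exactly the applicable rules, and so that each backward flip is strictly worse than its paired forward flip, it suffices to track the rules themselves; no other flip can be selected. For each main-symbol configuration I would resolve the stated priorities to pin down which rule actually fires under steepest ascent, then compute the successor and match it against the admissible list. The two-step shape of the dynamics means every ``a'' rule (main $\to$ intermediate) is followed by the matching ``b'' rule (intermediate $\to$ main), so I would walk through each forward pair and confirm that both the transient intermediate state and the resulting main state are admissible.

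The real content of the argument is the priority resolution in the configurations where several rules compete, which are exactly the places flagged in the text. In the $\{01\}^*CC\{01\}^*0$ configuration Rules~\ref{duplicateCarry-rule}, \ref{temporise-rule} and~\ref{secondCarry-rule} are all applicable, and priority~(1) must select Rule~\ref{secondCarry-rule}; and in every configuration with a trailing $0$ the increment Rule~\ref{incr-rule} competes with a carry-propagation rule, so priority~(2) must suppress Rule~\ref{incr-rule} whenever a carry is in progress. I would check that these two priorities are exactly strong enough, by noting that the only ways admissibility could break are Rule~\ref{incr-rule} firing outside the purely binary configuration $\{01\}^*0$ (which would corrupt the counting subsequence) and Rule~\ref{secondCarry-rule} failing to fire on a $CC$ pair (which would leave two carry symbols in an inadmissible position); both are prevented precisely by the imposed priorities. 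The analogous resolution on the intermediate side, forcing Rule~\ref{temporise-rule2} and Rule~\ref{secondCarry-rule2} to fire where required, then completes the closure.

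I expect the main obstacle to be bookkeeping rather than depth: the competing-rule cases above are the only nontrivial ones, and the chief risk is miscounting which window a rule sees near the two ends of the string. I would therefore treat the rightmost symbol (governed by the unary constraint $h$ and by Rules~\ref{incr-rule} and~\ref{firstCarry-rule}) and the leftmost symbol as separate boundary cases, to be sure the local-rewrite argument is not thrown off at the edges, after which the interior cases follow uniformly from the window analysis.
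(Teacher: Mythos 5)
Your proposal is correct and matches the paper's intent: the paper offers no explicit proof of Lemma~\ref{lem:CPP}, asserting only that it is ``straightforward to verify,'' and the two applicability tables it provides (rules applicable in each main-symbol and each intermediate-symbol configuration, together with the stated priorities) are precisely the skeleton of the finite closure check you describe. Your only superfluous step is appealing to the soft-constraint values to rule out other flips --- the lemma is a purely combinatorial statement about the prioritised rules themselves (the soft constraints are handled separately in Lemmas~\ref{lem:increasing}--\ref{lem:noother}) --- but this does not affect the validity of your case analysis.
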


The following table introduces the binary soft constraint f(a,b) which we use to construct a fitness landscape in which the counting path is a steepest ascent. 
We obtained these values by solving a system of linear equations (described in the appendix) which ensure that a steepest-ascent respects the transition rules and their relative priorities. 
Many solutions exist, but to simplify the proof, we chose a solution with integer values and with a large number of zeros.

\[
\begin{array}{c|c|ccccccccccc}
\multicolumn{2}{c|}{f(a,b)}  & \multicolumn{10}{c}{b} \\
\cline{3-12} 
 \multicolumn{2}{c|}{}  & 0 & 1 & C & X & i_{01} & i_{C0} & i_{0X} & i_{1C} & i_{X1} & i_{CX} \\
\cline{1-12}
\multirow{10}{*}{a}  & 0   & 0 & 4 & 6 & 0 & 0 & 0 & 0 & 0 & 0 & 0 \\
                     & 1   & 0 & 4 & 6 & 0 & 0 & 0 & 0 & 0 & 0 & 0 \\
                     & C   &13 & 0 & 0 & 0 & 0 & 8 & 0 & 0 & 0 & 0 \\
                     & X   &13 & 0 & 8 & 0 & 0 &12 & 0 & 0 & 0 & 0 \\
                 & i_{01}  & 0 & 0 & 0 & 0 & 0 & 0 & 0 & 0 & 0 & 0 \\
                 & i_{C0}  & 0 & 0 & 0 & 0 & 0 & 0 & 0 & 0 & 0 & 0 \\
                 & i_{0X}  & 0 & 0 & 7 & 0 & 0 & 0 & 0 & 0 & 0 & 0 \\
                 & i_{1C}  & 0 & 0 &23 & 0 & 0 & 0 & 0 & 0 & 0 & 0 \\
                 & i_{X1}  &14 & 0 & 0 & 0 & 0 & 0 & 0 & 0 & 0 & 0 \\
                 & i_{CX}  & 0 & 0 & 0 & 0 & 0 & 0 & 0 & 0 & 0 & 0 \\
\end{array}
\]

We apply $f$ to each successive pair $X_{i+1},X_i$ of variables, with increasing weight $4^{i-1}$.
We also apply the cost function $h$, whose sole purpose is to trigger rules~\ref{incr-rule} and \ref{firstCarry-rule}, with weight 1 to $X_2,X_1$.
The function $h$ is identically zero except for the values $h(i_{01}) = 1$ and $h(i_{1C}) = 5$.
The value of a state is thus:

\begin{equation}
F(X) = h(X_1) + \sum_{i=1}^{N} 4^{i-1} f(X_{i+1}, X_i)
\end{equation}

It remains to be shown that a steepest-ascent algorithm on the fitness landscape defined by $F$, starting at the initial state $0^N$ respects the transition rules and their priorities. 
By Lemma~\ref{lem:CPP}, since $0^N$ clearly satisfies CPP, we only need to consider states satisfying CPP.
It suffices to show that: 
(a) each transition triggered by the rules leads to an increase in $F$,
(b) when conflicts arise the priorities between rules are respected by steepest ascent, 
and (c) no other transition leads to an increase in $F$.

\begin{lem} \label{lem:final}
In the fitness landscape defined by the global objective function $F$, the counting path is a steepest-ascent.
\end{lem}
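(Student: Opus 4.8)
The plan is to reduce the claim to a finite, table-driven case analysis. By Lemma~\ref{lem:CPP} every state reachable from $0^N$ is admissible, so it suffices to take an arbitrary admissible state and show that the unique steepest-ascent move is exactly the move prescribed by the transition rules, with ties broken according to the stated priorities. The first fact I would record is a \emph{locality} property: since $F(X) = h(X_1) + \sum_i 4^{i-1} f(X_{i+1},X_i)$ and each symbol $X_i$ is encoded by the four Booleans $x_{1,i},\dots,x_{4,i}$, flipping a single Boolean variable alters only the symbol at one position $i$, hence changes only the two terms $4^{i-1}f(X_{i+1},X_i)$ and $4^{i-2}f(X_i,X_{i-1})$ (together with $h(X_1)$ when $i=1$). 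Consequently the gain of any candidate flip is a local quantity read off from the $f$-table, of the form $4^{i-2}\bigl(4\,\Delta_{\mathrm{left}} + \Delta_{\mathrm{right}}\bigr)$, where $\Delta_{\mathrm{left}}$ and $\Delta_{\mathrm{right}}$ are differences of two table entries.

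I would also record that a single Boolean flip changes the encoded symbol by Hamming distance one: from a main symbol (one bit set) it can only reach one of the three intermediate symbols sharing that bit, or the all-zero non-symbol; from an intermediate symbol (two bits set) it can only reach one of its two main symbols, or a three-bit non-symbol. This is what forces every main-to-main change to pass through an intermediate, and it sharply limits the flips that must be examined at each of the seven admissible main-symbol and eight admissible intermediate-symbol configurations listed above.

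With these two facts I would verify conditions (a)--(c) configuration by configuration. For (a), each prescribed forward flip is checked to have positive gain; for most carry moves the high-weight term $4^{i-1}\Delta_{\mathrm{left}}$ already fixes the sign, whereas for a few moves (e.g.\ Rule~\ref{actualCarry-rule}, where both relevant $f$-columns vanish) the sign is carried by the low-weight term and comes out to a small positive value such as $4^{i-1}$. For (b), the decisive comparison is priority~(1): in the $\{01\}^*CC\{01\}^*0$ configuration both the drop-carry Rule~\ref{secondCarry-rule} (acting on the right $C$) and the propagate Rule~\ref{duplicateCarry-rule}/\ref{temporise-rule} (acting two positions to the left) are improving, and here the geometric weighting does \emph{not} settle the tie, since the competing flips sit at different positions. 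One must instead compare the normalised gains directly: Rule~\ref{secondCarry-rule} yields $19\cdot 4^{i-2}$ (or $32\cdot 4^{i-2}$) against Rule~\ref{duplicateCarry-rule}'s $16\cdot 4^{i-2}$, so steepest ascent prefers Rule~\ref{secondCarry-rule}, as required. Priority~(2) is immediate: an increment (Rule~\ref{incr-rule}) or first-carry (Rule~\ref{firstCarry-rule}) at position~$1$ draws its entire gain from the tiny unary term $h$ and so contributes only $1$, which every carry move at a position $\ge 2$ beats.

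Finally, for (c) I would enumerate, at each configuration, the remaining single-Boolean flips and confirm that none increases $F$. The typical case is easy: flipping into a non-symbol or into an inadmissible pair zeroes out a previously positive, high-weight term $4^{i-1}f(X_{i+1},X_i)$ and so strictly decreases $F$. The delicate cases are the ``near-miss'' flips whose low-weight term is positive, such as turning a $1$ into $i_{X1}$, which gains $14\cdot 4^{i-2}$ on the right while losing $4\cdot 4^{i-1}=16\cdot 4^{i-2}$ on the left, for a net negative change. I expect this exhaustive ruling-out of spurious improving flips to be the main obstacle: it is precisely here that the exact integer entries of $f$ (obtained by solving the linear system of the appendix) and the base-$4$ weighting must cooperate, since a single mis-chosen value would open up an unwanted improving move. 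Once every configuration has been checked in this way, the prescribed transitions coincide with the steepest-ascent moves from $0^N$, which is the assertion of the lemma.
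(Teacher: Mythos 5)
Your proposal is correct and follows essentially the same route as the paper: the paper's Appendix~\ref{app:LemmaProofs} proves Lemma~\ref{lem:final} by exactly your decomposition --- Lemma~\ref{lem:increasing} for your point (a), Lemmas~\ref{lem:346} and \ref{lem:1all} for the two priority comparisons in your point (b), and Lemma~\ref{lem:noother} for your point (c) --- using the same locality of $F$ and the same Hamming-distance-one structure of the encoding, and your sample computations (e.g.\ the gain $19\cdot 4^{i-2}$ versus $16\cdot 4^{i-2}$ for Rule~\ref{secondCarry-rule} against Rule~\ref{duplicateCarry-rule}, and the net loss $-2\cdot 4^{i-2}$ for the spurious $1\to i_{X1}$ flip) agree with the paper's. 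The only quibble is that flips into non-symbols need not \emph{strictly} decrease $F$ when the affected $f$-terms are already zero; it suffices, and is what holds, that they do not increase $F$.
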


A series of lemmas in appendix~\ref{app:LemmaProofs} contains a proof of lemma \ref{lem:final}, and we have also verified the steepest ascent by coding in Python and Eclipse~\cite{github-anon}.
We can now state our main theorem:

\begin{thm}
Steepest ascent may take an exponential number of steps to reach a local optimum even when the fitness landscape is defined by soft constraints over Boolean variables with constraint graph of pathwidth and treewidth 7. 
\end{thm}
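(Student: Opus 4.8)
The plan is to assemble the theorem from the machinery built up over the preceding stages, the only genuinely new work being the bound on the width of the constraint graph. By Lemma~\ref{lem:final} the counting path is a steepest ascent from $0^N$, and by Lemma~\ref{lem:CPP} the prioritised transition rules keep the search inside the admissible configurations, so steepest ascent started at $0^N$ is forced to traverse the entire counting path until it halts at a state admitting no improving flip, i.e. a local optimum. It therefore remains to (i) bound the length of this path from below, (ii) confirm Booleanity and the arity bound, and (iii) compute the treewidth and pathwidth of the constraint graph $G$.

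For the length, I would observe that the main-symbol subsequence of the counting path realises the standard binary counting sequence $0^N, 0^{N-1}1, \dots$ of $2^N$ distinct numbers, and reaching each successive number requires at least one flip, so the path has at least $2^N$ steps. Since every one of the $N$ symbols is encoded by exactly four Boolean variables (Stage~3), the model has $n = 4N$ Boolean variables, and the path length is at least $2^{n/4}$, exponential in the number of variables. Booleanity is immediate from the encoding table, and the only constraints are the binary $f$ applied to each adjacent symbol pair $X_{i+1},X_i$ (arity $8$) together with the unary $h$ on $X_1$ (arity $4$), so the arity is bounded by $8$.

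The width computation is the crux of the statement. For the upper bound I would exhibit an explicit path decomposition: grouping the Boolean variables by symbol into blocks $G_1,\dots,G_N$ of four, take the bags $B_i = G_i \cup G_{i+1}$ for $1 \le i < N$. Each bag has eight vertices; every edge of $G$ lies inside the scope of some $f(X_{i+1},X_i)$ and hence inside $B_i$ (the scope of $h$ lies inside $B_1$); and each block $G_i$ occurs only in the consecutive bags $B_{i-1},B_i$, so its occurrences are contiguous. This is a valid path decomposition of width $7$, giving pathwidth at most $7$ and hence treewidth at most $7$. For the matching lower bound, note that the scope of a single constraint $f(X_{i+1},X_i)$ consists of eight Boolean variables which the definition of the constraint graph makes pairwise adjacent; this $K_8$ subgraph forces treewidth at least $7$, and since treewidth never exceeds pathwidth we conclude that both the pathwidth and the treewidth of $G$ equal $7$.

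The main obstacle is not in the theorem's own proof but upstream, in Lemma~\ref{lem:final}: verifying that the hand-crafted values of $f$ and $h$ make the counting path an honest steepest ascent, i.e. that every rule-prescribed flip strictly increases $F$, that the imposed rule priorities coincide with choosing a steepest improving flip whenever several compete, and that no flip outside the rules ever improves $F$. This is a finite but delicate case analysis over all admissible local configurations, which is why it is relegated to the appendix and cross-checked computationally. Granting that lemma, the remaining steps are routine, and the only subtlety in the width argument is that it is the single arity-$8$ constraint, rather than the path-of-blocks structure, that pins the treewidth at exactly $7$.
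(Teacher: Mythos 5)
Your proposal is correct and follows essentially the same route as the paper: invoke Lemma~\ref{lem:final} for the exponential-length steepest ascent, then bound the width of the Boolean constraint graph by~7. The only cosmetic difference is in the width argument, where you exhibit an explicit path decomposition with bags $B_i = G_i \cup G_{i+1}$ plus a $K_8$ lower bound, while the paper observes that the lexicographic variable order makes the constraint graph an interval graph with maximum clique size~$8$ --- two equivalent witnesses for pathwidth (and hence treewidth) exactly~$7$.
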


\begin{proof}
Since the counting path is of exponential length, Lemma~\ref{lem:final} tells us that steepest-ascent on the landscape defined by $F$ requires an exponential number of steps to reach a local optimum starting from the initial state $0^N$. 
The constraint graph of the $X_i$ variables is clearly a chain. 
When we replace each $X_i$ variable by the Boolean variables $x_{1,i}, x_{2,i}, x_{3,i}, x_{4,i}$, the constraint graph has treewidth $7$. 
To see this, observe that when the variables are ordered in lexicographic order ($x_{a,i} < x_{b,j}$ if $i<j$ or ($i=j$ and $a<b$)), the only earlier variables that constraint a variable are among the 7 variables that immediately precede it in this order. 
If we consider the preceeding constraining variables along with the variable they constraint as an interval then we can see that our constraint graph is an interval graph with maximum clique size $8$ -- so it has pathwidth $7$ and thus treewidth $7$.
\end{proof}

\section{Conclusion and Discussion}

We have shown that steepest-ascent local search can require an exponential number of steps to reach a local maximum even when domains are boolean and the constraint graph has treewidth (in fact, the more restrictive pathwidth) bounded by a constant $k$. 
It is an open question to determine the smallest value of $k$ for which this holds. 
We have given a proof for $k=7$. 
\manaut{Kaznatcheev, Cohen, and Jeavons}\textcite{artemCP} have shown that for $k=1$ all ascending paths are bounded by a quadratic number of steps and that for $k=2$ there exist some ascending paths that are exponentially long.
But their long ascending paths are not the paths that steepest ascent would follow,
so for values of $k$ between 2 and 6 the question is still open for steepest ascent.

The fact that local search may require exponential time to converge has implications both in the performance analysis of local search algorithms and in the understanding of biological processes such as evolution.
Of course, we are aware that the example we have constructed is pathological. 
For example, experiments on computational protein design benchmark problems indicate that the average number of steps required by steepest ascent to reach a local optimum is proportional to the logarithm of the size of the attraction basin of the local optimum~\cite{Simoncini}, and hence (sub)linear in the total number of variables.
Further theoretical research is required to identify conditions under which exponentially-long steepest-ascent paths cannot occur.

\balance
\section*{Acknowledgements}
We thank Chris Watkins for spotting a discrepancy between the table for $f(a,b)$ and verification code in an earlier draft.
This project has received funding from the European Research 
Council (ERC) under the European
Union's Horizon 2020 research and innovation programme 
(grant agreement No 714532). The paper reflects only the 
authors' views and not the views of the ERC
or the European Commission. The European Union is not liable 
for any use that may be made of the information contained therein.
David Cohen was supported by Leverhulme Trust Grant RPG-2018-161.


\newpage
\newgeometry{textwidth=13.5cm,textheight=19.75cm} 
\onecolumn
\appendix
\section{Proofs of Lemmas}
\label{app:LemmaProofs}
\begin{lem}  \label{lem:increasing}
If the CPP is satisfied, then all transition rules lead to an increase in the objective function $F$.
\end{lem}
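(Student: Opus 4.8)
The plan is to reduce Lemma~\ref{lem:increasing} to a finite, rule-by-rule evaluation of the change $\Delta F$ produced by firing each transition rule, using two structural observations. First, by the encoding table every transition rule changes a single symbol $X_j$ between a main symbol and an adjacent intermediate symbol, and this is exactly one Boolean flip. Second, since $F(X) = h(X_1) + \sum_{i=1}^N 4^{i-1} f(X_{i+1}, X_i)$, the symbol $X_j$ occurs in at most the two soft-constraint terms $4^{j-1}f(X_{j+1}, X_j)$ and $4^{j-2}f(X_j, X_{j-1})$, together with the unary term $h(X_1)$ when $j=1$. Writing $b, b'$ for the old and new values of $X_j$, this gives
\[
\Delta F \;=\; 4^{\,j-1}\bigl[f(X_{j+1},b') - f(X_{j+1},b)\bigr] \;+\; 4^{\,j-2}\bigl[f(b',X_{j-1}) - f(b,X_{j-1})\bigr]
\]
(with the unary term substituting for the second summand when $j=1$), so it suffices to show the bracketed quantity $4\Delta_{\mathrm{right}} + \Delta_{\mathrm{left}}$ is strictly positive, where $\Delta_{\mathrm{right}}$ and $\Delta_{\mathrm{left}}$ are the changes in the inner and outer $f$-values.

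The second ingredient is that we only need to consider states satisfying the CPP, so the admissible configuration in which a rule fires pins down the relevant neighbours $X_{j-1}$ and $X_{j+1}$ to a small finite set; the tables listing which rules apply in which configurations tell us exactly which neighbour patterns occur. For each of the fourteen sub-rules I would enumerate these neighbour values, read the relevant entries from the $f$-table and from $h$, and evaluate $\Delta F$. As a simple case, Rule~\ref{incr-rule} flips $X_1$ from $0$ to $i_{01}$ with $X_2 \in \{0,1\}$; here the inner $f$-change is $0$ and $\Delta F = h(i_{01}) - h(0) = 1 > 0$, which is precisely why the unary $h$ is needed. As a representative carry case, Rule~\ref{duplicateCarry-rule} flips $X_{i+1}$ from $1$ to $i_{1C}$ with $X_i = C$ and $X_{i+2}\in\{0,1\}$, giving $\Delta F = 4^{i-1}[f(i_{1C},C) - f(1,C)] + 4^i[f(X_{i+2},i_{1C}) - f(X_{i+2},1)] = 17\cdot 4^{i-1} - 16\cdot 4^{i-1} = 4^{i-1} > 0$.

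The main obstacle, and the only genuinely delicate part, is exactly this last type of case. For the carry-propagation rules (Rules~\ref{duplicateCarry-rule}, \ref{temporise-rule}, \ref{secondCarry-rule}, \ref{dropCarry-rule}, \ref{actualCarry-rule} and their second halves) the large favourable change typically sits in the \emph{lower}-weight outer term while an unfavourable change sits in the \emph{higher}-weight inner term, or vice versa, so positivity rests on the tight cancellation $4\Delta_{\mathrm{right}} + \Delta_{\mathrm{left}} > 0$, leaving in several cases a margin as small as one unit of $4^{j-2}$. This is why the table of $f$-values cannot be chosen crudely — it was obtained by solving the linear system referred to before the lemma — and why each of these inequalities must be checked on the nose rather than bounded loosely. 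The boundary cases are strictly easier: at $j=1$ the outer $f$-term is replaced by the unary contribution of $h$, and at the most significant end the leading symbol remains $0$ throughout the counting path (which only reaches $01^{N-1}$), so it acts as a fixed virtual zero and every firing rule has both neighbours available. Collecting the fourteen positive evaluations then establishes the lemma.
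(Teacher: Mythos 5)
Your proposal is correct and follows essentially the same route as the paper's proof: exploit the locality of $F$ so that each rule's effect reduces to a weighted two-term (plus $h$) evaluation from the table, use the CPP to pin down the neighbouring symbols, and verify strict positivity rule by rule; your two worked cases (Rule~\ref{incr-rule} giving $\Delta F = 1$ and Rule~\ref{duplicateCarry-rule} giving $\Delta F = 4^{i-1}$) match the paper's computations exactly. All that remains is to write out the other twelve sub-rules, including the empty-suffix boundary cases where the changed symbol is $X_1$ and the lower-weight $f$-term is replaced by $h$, which the paper checks separately and which are routine in your framework.
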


\begin{proof}
In the following let $S,S'$ be any (possibly empty) string and let $a$ be any symbol from $\{0,1\}$.
To show that applying Rules~\ref{incr-rule}, \ref{firstCarry-rule}, \ref{incr-rule2} or \ref{firstCarry-rule2} leads to an increase in $F$, we only need to show that

\begin{equation*}
F(S \ a\ 0) < F(S\ a\  i_{01}) < F(S\ a\ 1) < F(S\ a\  i_{1C}) < F(S\ a\ C)
\end{equation*}

\noindent This holds since
\begin{eqnarray*}
f(a,0) + h(0) = 0 & < \  f(a,i_{01}) + h(i_{01}) = 1 & < \  f(a,1) + h(1) = 4 \\
& < \  f(a,i_{1C}) + h(i_{1C}) = 5 & <  \ f(a,C) + h(C) = 6
\end{eqnarray*}

\noindent For Rules~\ref{duplicateCarry-rule} and \ref{duplicateCarry-rule2} to increase $F$, we require
\[ F(S\ a\ 1\ C\ S') \ < \ F(S\ a\ i_{1C}\ C\ S') \ < \ F(S\ a\ C\ C\ S')
\]
(where $a \in \{0,1\}$ follows from the CPP). These inequalities hold since
\begin{eqnarray*}
& 4f(a,1) + f(1,C) = 22 \ < \ 4f(a,i_{1C}) + f(i_{1C},C) = 23 \ < \ 4f(a,C) + f(C,C) = 24 &
\end{eqnarray*}
For Rules~\ref{temporise-rule} and \ref{temporise-rule2} to increase $F$, we require
\[ F(S\ a\ 0\ C\ S') \ < \ F(S\ a\ i_{0X}\ C\ S') \ < \ F(S\ a\ X\ C\ S')
\]
(where $a \in \{0,1\}$ follows from the CPP). These inequalities hold since
\begin{eqnarray*}
& 4f(a,0) + f(0,C) = 6 \ < \ 4f(a,i_{0X}) + f(i_{0X},C) = 7 \ < \ 4f(a,X) + f(X,C) = 8 &
\end{eqnarray*}
For Rules~\ref{dropCarry-rule} and \ref{dropCarry-rule2} to increase $F$, we require
\[ F(S\ X\ C\ S') \ < \ F(S\ X\ i_{C0}\ S') \ < \ F(S\ X\ 0\ S')
\]
where, by the CPP, $S'$ is a string of zeros. This holds since
\begin{eqnarray*}
& 4f(X,C) + f(C,0)  = 45 \ < \ 4f(X,i_{C0}) + f(i_{C0},0) = 48 \ < \ 52 = 4f(X,0) + f(0,0)  &
\end{eqnarray*}
when $S'$ is non-empty (and $f(X,C) = 8 < f(X,i_{C0}) = 12 < f(X,0) = 13$
if $S'$ is the empty string).
For Rules~\ref{secondCarry-rule} and \ref{secondCarry-rule2} to increase $F$, we require
\[ F(S\ C\ C\ S') \ < \ F(S\ C\ i_{C0}\ S') \ < \ F(S\ C\ 0\ S')
\]
where, again by the CPP, $S'$ is a string of zeros. This holds since
\begin{eqnarray*}
& 4f(C,C) + f(C,0) = 13 \ < \ 4f(C,i_{C0}) + f(i_{C0},0) = 32 \ < \ 52 = 4f(C,0) + f(0,0) &
\end{eqnarray*}
when $S'$ is non-empty (and $f(C,C) =0 < f(C,i_{C0}) = 8 < f(C,0) =13$
if $S'$ is the empty string).
For Rules~\ref{actualCarry-rule} and \ref{actualCarry-rule2} to increase $F$, we require
\[ F(S\ a\ X\ 0 \ S') \ < \ F(S\ a\ i_{X1}\ 0 \ S') \ < \ F(S\ a\ 1 \ 0\ S')
\]
(where $a \in \{0,1\}$ and $S'$ is a string of zeros by the CPP). These inequalities hold since
\begin{eqnarray*}
&  4f(a,X) + f(X,0) = 13 \ < \ 4f(a,i_{X1}) + f(i_{X1},0) = 14 \ < \ 4f(a,1) + f(1,0) \leq 16 &
\end{eqnarray*}
\end{proof}

\begin{lem} \label{lem:346}
Assuming the CPP, when one of Rules~\ref{duplicateCarry-rule} or \ref{temporise-rule} are in conflict with one of 
Rules~\ref{secondCarry-rule} or \ref{secondCarry-rule2},
the transitions triggered by the latter rules lead to a greater increase in $F$.
\end{lem}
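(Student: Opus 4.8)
The plan is to enumerate the admissible states in which such a conflict can occur and then compare, flip by flip, the gain in $F$ that each competing rule delivers. First I would fix the geometry. Rule~\ref{secondCarry-rule} fires only on a block $CC$ and Rule~\ref{secondCarry-rule2} only on a block $C\,i_{C0}$, while Rules~\ref{duplicateCarry-rule} and \ref{temporise-rule} fire on a $1C$ and a $0C$ respectively. By the CPP an admissible state contains a single maximal block of carry symbols and, as the proof of Lemma~\ref{lem:increasing} records, only zeros to its right; hence the only occurrence of $1C$ or $0C$ is the symbol immediately to the left of that block. Writing the carry symbol changed by Rule~\ref{secondCarry-rule}/\ref{secondCarry-rule2} at position $q$, this places the competing flip of Rule~\ref{duplicateCarry-rule}/\ref{temporise-rule} at position $q+2$, reducing the lemma to the four patterns $1CC$, $0CC$, $1C\,i_{C0}$, $0C\,i_{C0}$ (with zeros to the right).

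Next I would evaluate each gain from $F(X)=h(X_1)+\sum_i 4^{i-1}f(X_{i+1},X_i)$, reusing the per-move arithmetic of Lemma~\ref{lem:increasing}. A flip at position $p$ changes only $4^{p-1}f(X_{p+1},X_p)$ and $4^{p-2}f(X_p,X_{p-1})$, so dividing through by the common weight $4^{q-2}$ turns the comparison into one of small integers. In these units, Rule~\ref{secondCarry-rule} gains $4(f(C,i_{C0})-f(C,C))+(f(i_{C0},0)-f(C,0))=32-13=19$ and Rule~\ref{secondCarry-rule2} gains $4(f(C,0)-f(C,i_{C0}))=52-32=20$; both Rule~\ref{duplicateCarry-rule} and Rule~\ref{temporise-rule}, whose flip sits two positions higher and so carries the extra weight $4^{q+1}$, net only $16$ -- for Rule~\ref{duplicateCarry-rule} the large inner increase $f(1,C)=6\to f(i_{1C},C)=23$ is almost entirely cancelled by the outer loss $f(a,1)=4\to f(a,i_{1C})=0$ at the heavier weight, and for Rule~\ref{temporise-rule} only the inner constraint moves, by $f(0,C)=6\to f(i_{0X},C)=7$. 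Since $19>16$ and $20>16$, Rules~\ref{secondCarry-rule} and \ref{secondCarry-rule2} win.

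Finally I would dispose of the boundary case $q=1$, where the changed symbol is the last variable, so its right-hand constraint is replaced by the unary term $h$ and the factoring by $4^{q-2}$ breaks down; a direct evaluation gives gains $f(C,i_{C0})=8$ and $f(C,0)-f(C,i_{C0})=5$ for the two latter rules against $4$ for the former, so the strict inequality survives. I expect the weight bookkeeping to be the only real obstacle: the competing flips of Rules~\ref{duplicateCarry-rule}/\ref{temporise-rule} lie one significant base-$4$ digit higher than those of Rules~\ref{secondCarry-rule}/\ref{secondCarry-rule2}, so at first glance they look larger, and the heart of the argument is precisely that their forced loss on the higher-order constraint caps their net gain at $16\cdot4^{q-2}$, just short of the $19$ and $20$ produced on the carry block itself.
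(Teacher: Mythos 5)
Your proposal is correct and follows essentially the same route as the paper: the same four conflict patterns ($1CC$, $0CC$, $1C\,i_{C0}$, $0C\,i_{C0}$ with zeros to the right), the same weighted arithmetic (your per-move gains of $19$, $20$ versus $16$ in units of $4^{q-2}$ are exactly the differences the paper obtains by comparing the two successor states directly, e.g.\ $384-365$ vs.\ $381-365$), and the same separate treatment of the boundary case where the rightmost carry sits at $X_1$. The only difference is presentational — you compare gains from the common predecessor rather than the values of the two resulting states — which is an equivalent computation.
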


\begin{proof}
Given the limited form of states described by the CPP, there is only one type of conflict for each pair of rules.
In the following calcualtions, $S,S'$ represent (possibly empty) strings and (since we are requiring the CPP) the symbol $a$ will always stand for either 0 or 1.

Rules~\ref{duplicateCarry-rule} and \ref{secondCarry-rule} are only in conflict in the state $Sa1CCS'$.
For the priority of Rule~\ref{secondCarry-rule} to be respected, it suffices to show that 
\[ F(S \ a \ i_{1C} \ C \ C \ S') \ < \ F(S \ a \ 1 \ C \ i_{C0} \ S')
\]
where $S'$ is a string of zeros. This holds since
\begin{eqnarray*}
 & & 64f(a,i_{1C}) + 16f(i_{1C},C) + 4f(C,C) + f(C,0) = 381  \\ 
 &  & \ \ <  \ 384 = 64f(a,1) + 16f(1,C) + 4f(C,i_{C0}) + f(i_{C0},0)  
\end{eqnarray*}
when $S'$ is non-empty (and 
$16f(a,i_{1C}) + 4f(i_{1C},C) + f(C,C) = 92 < 96 = 16f(a,1) + 4f(1,C) + f(C,i_{C0})$
if $S'$ is the empty string).

Rules~\ref{duplicateCarry-rule} and \ref{secondCarry-rule2} are only in conflict in the state $Sa1Ci_{C0}S'$.
For the priority of Rule~\ref{secondCarry-rule2} to be respected, it suffices to show that 
\[ F(S \ a \ i_{1C} \ C \ i_{C0} \ S') \ < \ F(S \ a \ 1 \ C \ 0 \ S')
\]
where, by the CPP, $S'$ is a string of zeros. This holds since
\begin{eqnarray*}
 & & 64f(a,i_{1C}) + 16f(i_{1C},C) + 4f(C,i_{C0}) + f(i_{C0},0) = 400  \\ 
 &  & \ \ <  \ 404 = 64f(a,1) + 16f(1,C) + 4f(C,0) + f(0,0)  
\end{eqnarray*}
when $S'$ is non-empty (and 
$16f(a,i_{1C}) + 4f(i_{1C},C) + f(C,i_{C0}) = 100 <  101 = 16f(a,1) + 4f(1,C) + f(C,0)$
if $S'$ is the empty string).

Rules~\ref{temporise-rule} and \ref{secondCarry-rule} are only in conflict in the state $Sa0CCS'$.
For the priority of Rule~\ref{secondCarry-rule} to be respected, it suffices to show that 
\[ F(S \ a \ i_{0X} \ C \ C \ S') \ < \ F(S \ a \ 0 \ C \ i_{C0} \ S')
\]
where, by the CPP, $S'$ is a string of zeros. This holds since
\begin{eqnarray*}
 & & 64f(a,i_{0X}) + 16f(i_{0X},C) + 4f(C,C) + f(C,0) = 125  \\ 
 &  & \ \ <  \ 128 = 64f(a,0) + 16f(0,C) + 4f(C,i_{C0}) + f(i_{C0},0)  
\end{eqnarray*}
when $S'$ is non-empty (and 
$16f(a,i_{0X}) + 4f(i_{0X},C) + f(C,C) = 28 <  32 = 16f(a,0) + 4f(0,C) + f(C,i_{C0})$
if $S'$ is the empty string).

Rules~\ref{temporise-rule} and \ref{secondCarry-rule2} are only in conflict in the state $Sa0Ci_{C0}S'$.
For the priority of Rule~\ref{secondCarry-rule2} to be respected, it suffices to show that 
\[ F(S \ a \ i_{0X} \ C \ i_{C0} \ S') \ < \ F(S \ a \ 0 \ C \ 0 \ S')
\]
where, by the CPP, $S'$ is a string of zeros. This holds since
\begin{eqnarray*}
 & & 64f(a,i_{0X}) + 16f(i_{0X},C) + 4f(C,i_{C0}) + f(i_{C0},0) = 144  \\ 
 &  & \ \ <  \ 148 = 64f(a,0) + 16f(0,C) + 4f(C,0) + f(0,0)  
\end{eqnarray*}
when $S'$ is non-empty (and 
$16f(a,i_{0X}) + 4f(i_{0X},C) + f(C,i_{C0}) = 36 <  37 = 16f(a,0) + 4f(0,C) + f(C,0)$
if $S'$ is the empty string).
\end{proof}

\begin{lem}  \label{lem:1all}
Assuming the CPP, when Rule~\ref{incr-rule} is in conflict with one of the other rules,
the transition triggered by the other rule always leads to a greater increase in $F$ then Rule~\ref{incr-rule}.
\end{lem}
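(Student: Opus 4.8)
The plan is to compare the fitness increase produced by Rule~\ref{incr-rule} with that of each rule it can conflict with, exploiting the geometric weighting $4^{i-1}$ of the binary constraints in $F$. First I would pin down the increase from Rule~\ref{incr-rule} exactly: it acts on the least significant symbol $X_1$, replacing $0$ by $i_{01}$, so the only affected terms are the unary cost $h(X_1)$ and the binary term $f(X_2,X_1)$ carrying weight $4^0=1$. Since $f(a,0)=f(a,i_{01})=0$ for $a\in\{0,1\}$ while $h(i_{01})-h(0)=1$, the increase caused by Rule~\ref{incr-rule} is exactly $1$. It therefore suffices to show that every transition that can be triggered in conflict with it raises $F$ by strictly more than $1$.

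Next I would read off from the configuration tables which rules can fire simultaneously with Rule~\ref{incr-rule}: these are Rules~\ref{duplicateCarry-rule}, \ref{temporise-rule}, \ref{secondCarry-rule}, \ref{dropCarry-rule}, \ref{actualCarry-rule}, \ref{temporise-rule2} and~\ref{secondCarry-rule2}. The key structural observation is that Rule~\ref{incr-rule} requires $X_2\in\{0,1\}$, whereas each competing rule acts on, or immediately to the left of, one of the special symbols $C$, $X$, $i_{0X}$ or $i_{C0}$. I would argue that this precondition forces every genuine competitor to operate strictly to the left of position $2$: were the relevant special symbol at position $1$ or $2$, then either $X_1\neq 0$ or $X_2\notin\{0,1\}$, so Rule~\ref{incr-rule} would not be applicable and no conflict could arise. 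Hence in any actual conflict the competing transition modifies a constraint pair whose smaller weight is at least $4^{1}$.

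Finally I would invoke the local increases already computed in the proof of Lemma~\ref{lem:increasing}. The competitors that change the ``upper'' symbol of an affected pair (Rules~\ref{duplicateCarry-rule}, \ref{temporise-rule}, \ref{actualCarry-rule}, \ref{temporise-rule2}) there raise the bracketed quantity $4f(\cdot,\cdot)+f(\cdot,\cdot)$ by exactly $1$, but at positional weight at least $4^{1}$, giving a net increase of at least $4$; the competitors that change the ``lower'' symbol (Rules~\ref{secondCarry-rule}, \ref{dropCarry-rule}, \ref{secondCarry-rule2}) raise their bracketed quantity by $19$, $3$ and $20$ respectively at weight at least $4^{0}$, again exceeding $1$. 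In every case the increase strictly dominates Rule~\ref{incr-rule}'s increase of $1$, so steepest ascent prefers the competitor, proving the claim. I expect the main obstacle to be the careful dispatch of the single potentially-tying case: Rule~\ref{actualCarry-rule} applied to an $X$ sitting at position $2$ would also yield an increase of exactly $1$, and the argument must lean precisely on the $X_2\in\{0,1\}$ precondition to conclude that Rule~\ref{incr-rule} cannot fire there at all, so that no true conflict exists. Once that degenerate tie is excluded, every remaining comparison is routine arithmetic with the tabulated values of $f$ and the weights recorded in Lemma~\ref{lem:increasing}.
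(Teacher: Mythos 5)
Your proposal takes essentially the same route as the paper: Rule~\ref{incr-rule} raises $F$ by exactly $1$ (only $h$ changes, since $f(a,0)=f(a,i_{01})=0$), and every competitor acts on a position forced by the preconditions of Rule~\ref{incr-rule} to be at least $3$, so its local increase is amplified by a weight that makes it strictly exceed $1$. The paper packages this as a list of twelve conflict-state templates ending in $0^r$ with $r\geq 2$ and the single bound ``at least $4^{r-1}>1$,'' whereas you make the per-rule arithmetic explicit; your treatment of the near-tie for Rule~\ref{actualCarry-rule} with $X$ at position $2$ is a correct and welcome extra observation. The one genuine defect is that your enumeration of competitors is incomplete: Rules~\ref{duplicateCarry-rule2}, \ref{dropCarry-rule2} and \ref{actualCarry-rule2} also conflict with Rule~\ref{incr-rule}, in states of the form $Si_{1C}C0^r$, $SXi_{C0}0^r$ and $Si_{X1}0^r$ with $r\geq 2$ (all of which occur on the counting path and are listed in the paper's proof). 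You appear to have read the applicable-rule tables literally, but those tables omit Rule~\ref{incr-rule} from several intermediate-symbol rows whose configurations can nonetheless end in $a0$ with $a\in\{0,1\}$. The omission does not threaten the argument --- the same weight bound disposes of these three cases (bracketed increases of $1$, $4$ and $2$ respectively, at weight at least $4$) --- but as written your case analysis does not cover them, so they must be added for the proof to be complete.
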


\begin{proof}
By the CPP, conflicts between Rule~\ref{incr-rule} and another rule 
can only occur in a state ending with a string of at least two zeros,
namely one of $S0C0^r$, $S1C0^r$, $SX0^r$, $Si_{X1}0^r$,  $S0CC0^r$, $S1CC0^r$, $SXC0^r$, 
$S0Ci_{C0}0^r$, $S1Ci_{C0}0^r$, $Si_{1C}C0^r$, $Si_{0X}C0^r$, $SXi_{C0}0^r$
for some string $S$ of zeros and ones, and with $r \geq 2$.
In each case, the application of Rule~\ref{incr-rule} leads to an increase of just 1 in the value of $F$,
whereas the application of any other rule leads to an increase of at least $4^{r-1} > 1$ (by the same 
exhaustive case analysis as given in the proof of Lemma~\ref{lem:increasing}).
\end{proof}

\begin{lem}  \label{lem:noother}
Assuming the CPP is satisfied, the only transitions which
lead to an increase in $F$ are those triggered by Rules 1-14.
\end{lem}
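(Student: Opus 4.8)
The plan is to discharge the last of the three obligations (c) stated before Lemma~\ref{lem:final}: assuming the CPP, every value-increasing single-bit flip is one of the fourteen rule transitions. Combined with Lemma~\ref{lem:increasing} (each rule flip does increase $F$) and Lemmas~\ref{lem:346}--\ref{lem:1all} (steepest ascent respects the rule priorities), this determines the steepest-ascent successor of every admissible state and so proves Lemma~\ref{lem:final}.

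First I would localise the computation. A single Boolean flip changes exactly one symbol $X_j$ from $s$ to $s'$, so by the definition of $F$ it perturbs only the two incident binary terms (and the unary term when $j=1$); for an interior position the change is
\begin{equation*}
\Delta F = 4^{j-1}\bigl[f(X_{j+1},s')-f(X_{j+1},s)\bigr] + 4^{j-2}\bigl[f(s',X_{j-1})-f(s,X_{j-1})\bigr],
\end{equation*}
with the left term dropped and $h(s')-h(s)$ added when $j=1$. Hence whether a flip improves depends only on $s$, $s'$ and the two neighbours $X_{j-1}, X_{j+1}$, and the CPP (Lemma~\ref{lem:CPP}) pins these neighbours down through its two lists of admissible configurations. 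I would also record that, since the main symbols are exactly the weight-one bit patterns and the intermediate symbols exactly the weight-two patterns, a one-bit flip of a valid symbol either raises a main symbol to an intermediate one, lowers an intermediate symbol to a main one, or produces a non-symbol (weight $0$ or $\geq 3$).

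The non-symbol case is immediate: $f$ and $h$ are tabulated only on symbols and are nonnegative, so a flip to a non-symbol merely replaces nonnegative terms by zero and gives $\Delta F \leq 0$. For the main$\leftrightarrow$intermediate flips I would run a finite enumeration driven by the sparsity of $f$ (twelve nonzero entries) and of $h$ (only $h(i_{01})=1$, $h(i_{1C})=5$). Factoring out $4^{j-2}$, an interior flip improves iff $4f(X_{j+1},s')+f(s',X_{j-1}) > 4f(X_{j+1},s)+f(s,X_{j-1})$, and the weight $4$ on the left-neighbour term means a flip can gain value only when the new symbol has a large entry against its admissible neighbours. Organising the analysis by target symbol makes this transparent: intermediate symbols whose $f$-row and $f$-column vanish (namely $i_{01}$ and $i_{CX}$) can raise $F$ only through the unary term, which happens solely for $i_{01}$ at the last position and reproduces the increment rule~\ref{incr-rule}, while $i_{CX}$ is never improving and is indeed never a rule target; each remaining intermediate symbol carries one or two large $f$-entries (such as $f(i_{1C},C)=23$, $f(i_{X1},0)=14$, $f(C,i_{C0})=8$, $f(X,i_{C0})=12$), and the CPP guarantees that the only admissible neighbour contexts realising those entries are exactly the ones in which the matching carry, temporise, or final-carry rule fires. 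The intermediate-to-main direction is handled symmetrically.

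The main obstacle is completeness rather than any single inequality: one must confirm that the enumeration visits every admissible configuration in both CPP lists, every flippable position inside each (including the two ends of the array and the unary term at position $1$), and all three or four candidate targets reachable from a given symbol by one bit flip. The factor-$4$ geometric weighting is the structural fact that keeps this finite and manageable, because it confines the effect of each flip to its two immediate neighbours; in particular it rules out any long-range shortcut, since a single bit alters only one symbol and the jump from $0^N$ to $10^{N-1}$ that motivated the intermediate symbols in Stage~3 is not a legal one-bit move. I would cross-check the boundary sub-cases---empty neighbour strings, and the appearance of the high-priority drop-carry transition (Rule~\ref{secondCarry-rule}) in the $CC$ context---exactly as in the proofs of Lemmas~\ref{lem:increasing} and \ref{lem:346}, to be sure no stray improving flip has been overlooked.
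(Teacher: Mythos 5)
Your proposal is correct and follows essentially the same route as the paper's proof: restrict attention to single-bit flips between main (weight-one) and intermediate (weight-two) encodings, then enumerate by target intermediate symbol, using the sparsity of $f$ and $h$ together with the CPP to reduce to finitely many contexts, each of which is either a rule transition or the inverse of one (hence decreasing by Lemma~\ref{lem:increasing}). If anything you are slightly more explicit than the paper, which silently omits the flip-to-non-symbol case and the symbol $i_{CX}$ that you correctly dispatch via nonnegativity of $f$ and $h$.
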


\begin{proof}
Recall that the only possible transitions are those between a main symbol ($0$, $1$, $C$ or $X$)
and an intermediate symbol, since all other transitions change 2 bits. 

We first consider all possible transitions from a main symbol to one of the six intermediate symbols:
\begin{itemize}
\item[{\bf $i_{C0}$: }] since the only non-zero values of $f$ (or $h$) with an argument $i_{C0}$ are
$f(X,i_{C0}) = 12$ and $f(C,i_{C0}) = 8$, the only transitions we need to consider are
$SX0S' \rightarrow SXi_{C0}S'$, $SC0S' \rightarrow SCi_{C0}S'$ and $SXCS' \rightarrow SXi_{C0}S'$, 
$SCCS' \rightarrow SCi_{C0}S'$. The first two transitions lead to a decrease in $F$ since they
correspond to the inverse transitions of Rules~\ref{dropCarry-rule2} and \ref{secondCarry-rule2}. 
The latter two transition correspond to Rules~\ref{dropCarry-rule} and \ref{secondCarry-rule}.
\item[{\bf $i_{0X}$: }] since the only non-zero value of $f$ (or $h$) with an argument $i_{0X}$
is $f(i_{0X},C) = 7$, the only transitions we need to consider are $S0CS' \rightarrow Si_{0X}XS'$
and $SXCS' \rightarrow Si_{0X}XS'$. The former corresponds to Rule~\ref{temporise-rule},
whereas the latter leads to a decrease in $F$ since it is the inverse transition of Rule~\ref{temporise-rule2}.
\item[{\bf $i_{1C}$: }] since the only non-zero values of $f$ or $h$ with an argument $i_{1C}$ are
$f(i_{1C},C) = 23$ and $h(i_{1C}) = 5$, the only transitions we need to consider are
$S1CS' \rightarrow Si_{1C}CS'$, $SCCS' \rightarrow Si_{1C}CS'$, and $Sa1 \rightarrow Sai_{1C}$ 
$SaC \rightarrow Sai_{1C}$ (where $a \in \{0,1\}$). Of these transitions, the first corresponds to 
Rule~\ref{duplicateCarry-rule}, the second to the inverse of Rule~\ref{duplicateCarry-rule2},
the third to Rule~\ref{firstCarry-rule} and the last to the inverse of Rule~\ref{firstCarry-rule2}.
\item[{\bf $i_{X1}$: }] since the only non-zero value of $f$ (or $h$) with an argument $i_{X1}$ is
$f(i_{X1},0) = 14$, the only transitions we need to consider are $S10S' \rightarrow Si_{X1}0S'$
and $SX0S' \rightarrow Si_{X1}0S'$. The former corresponds to the inverse of Rule~\ref{actualCarry-rule2}
and the latter corresponds to Rule~\ref{actualCarry-rule}.
\item[{\bf $i_{01}$: }] since the only non-zero values of ($f$ or) $h$ with an argument $i_{01}$ 
are $h(i_{01}) = 1$, the only transitions we need to consider are 
$Sa0 \rightarrow Sai_{01}$ and $Sa1 \rightarrow Sai_{01}$ (where $a \in \{0,1\}$).
The former corresponds to Rule~\ref{incr-rule} and the latter corresponds to the inverse of
Rule~\ref{incr-rule2}.
\end{itemize}

We now consider all possible transitions from one of the six intermediate symbols to a main symbol:
\begin{itemize}
\item[{\bf $i_{C0}$: }] since, by the CPP, $i_{C0}$ can only occur just after $X$ or $C$,
the only transitions we need to consider are $SXi_{C0}S' \rightarrow SX0S'$, $SCi_{C0}S' \rightarrow SC0S'$,
$SXi_{C0}S' \rightarrow SXCS'$, and $SCi_{C0}S' \rightarrow SCCS'$. Of these transitions, the first
corresponds to Rule~\ref{dropCarry-rule2}, the second to Rule~\ref{secondCarry-rule2}, the third
to the inverse of Rule~\ref{dropCarry-rule} and the last to the inverse of Rule~\ref{secondCarry-rule}.
\item[{\bf $i_{0X}$: }] since, by the CPP, $i_{0X}$ can only occur just before $C$, the only transitions
we need to consider are $Si_{0X}CS' \rightarrow S0CS'$ and $Si_{0X}CS' \rightarrow SXCS'$. The first
of these transitions corresponds to Rule~\ref{temporise-rule} and the second corresponds to the inverse
of Rule~\ref{temporise-rule2}.
\item[{\bf $i_{1C}$: }] since, by the CPP, $i_{1C}$ can only occur at $X_1$ (with $X_2 \in \{0,1\}$) or just before $C$,
the only transitions we need to consider are $Sai_{1C} \rightarrow Sa1$, 
$Sai_{1C} \rightarrow SaC$ (where $a \in \{0,1\}$), and $Si_{1C}CS' \rightarrow S1CS'$, $Si_{1C}CS' \rightarrow SCCS'$.
Of these transitions, the first is the inverse of Rule~\ref{incr-rule}, the second is Rule~\ref{firstCarry-rule2},
the third is the inverse of Rule~\ref{duplicateCarry-rule} and the fourth is Rule~\ref{duplicateCarry-rule2}.
\item[{\bf $i_{X1}$: }] since, by the CPP, $i_{X1}$ can only occur just before $0$, the only transitions we
need to consider are $Si_{X1}0S' \rightarrow S10S'$ and $Si_{X1}0S' \rightarrow SX0S'$. The first of
these transitions corresponds to Rule~\ref{actualCarry-rule2} and the second to the inverse of
Rule~\ref{actualCarry-rule}.
\item[{\bf $i_{01}$: }] since, by the CPP, $i_{01}$ can only occur at $X_1$ (with $X_2 \in \{0,1\}$), the
only transitions we need to consider are $Sai_{01} \rightarrow Sa0$ and $Sai_{01} \rightarrow Sa1$.
The former corresponds to the inverse of Rule~\ref{incr-rule} and the latter to Rule~\ref{incr-rule2}.
\end{itemize}
\end{proof}

Lemma \ref{lem:final} now follows directly from Lemmas~\ref{lem:increasing}, \ref{lem:346}, \ref{lem:1all}, \ref{lem:noother}, and the steepest ascent has been verified by coding in Python and Eclipse~\cite{github-anon}.

\newpage
\printbibliography

\end{document}